\newenvironment{symbolfootnotes}
  {\par\edef\savedfootnotenumber{\number\value{footnote}}
   
   \setcounter{footnote}{1}}
  {\par\setcounter{footnote}{\savedfootnotenumber}}
\newcommand{\doublefigure}[8]{
\begin{minipage}[t]{\textwidth}
\fbox{
\begin{minipage}[t][#1][b]{.58\textwidth}
\vspace{0pt}
\centering

 \includegraphics[width=#5\textwidth]{#3.eps}
\medskip

\begin{minipage}[t][#2][t]{\textwidth}
\centering\small
(a) #4
\end{minipage}
\end{minipage}}
\fbox{\begin{minipage}[t][#1][b]{0.38\textwidth}
\centering
 \includegraphics[width=#8\textwidth]{#6.eps}
\medskip

\begin{minipage}[t][#2][t]{\textwidth}
\centering\small
(b) #7
\end{minipage}
\end{minipage}}
\end{minipage}
}
\newcommand{\doublefigurestacked}[9]{
\fbox{
\begin{minipage}[t][#1][b]{0.98\textwidth}
\centering
 \includegraphics[width=#8\textwidth]{#6.eps}
\medskip

\begin{minipage}[t][#9][t]{\textwidth}
\centering\small
(a) #7
\end{minipage}
\end{minipage}}

\begin{minipage}[t]{0.98\textwidth}
\fbox{
\begin{minipage}[t][45mm][b]{\textwidth}
\vspace{0pt}
\centering
 \includegraphics[width=0.6\textwidth]{#3.eps}
\medskip

\begin{minipage}[t][#2][t]{\textwidth}
\centering\small
(b) #4
\end{minipage}
\end{minipage}}

\end{minipage}
}
\theoremstyle{definition}
\newtheorem{lemma}{Lemma}[section]
\newtheorem{theorem}{Theorem}[section]
\begin{document}

\title{\textbf{Optimal program-size complexity for self-assembly at temperature 1 in 3D}}

\author{%
David Furcy\thanks{Computer Science Department, University of Wisconsin--Oshkosh, Oshkosh, WI 54901, USA,\protect\url{furcyd@uwosh.edu}.}
\and
Samuel Micka\thanks{Computer Science Department, University of Wisconsin--Oshkosh, Oshkosh, WI 54901, USA,\protect\url{mickas37@gmail.com}.}
\and
Scott M. Summers\thanks{Computer Science Department, University of Wisconsin--Oshkosh, Oshkosh, WI 54901, USA,\protect\url{summerss@uwosh.edu}.}
}


\date{}
\maketitle

\begin{abstract}
Working in a three-dimensional variant of Winfree's abstract Tile Assembly Model, we show that, for all $N \in \mathbb{N}$, there is a tile set that uniquely self-assembles into an $N \times N$ square shape at temperature 1 with optimal program-size complexity of $O(\log N / \log \log N)$ (the program-size complexity, also known as tile complexity, of a shape is the minimum number of unique tile types required to uniquely self-assemble it). Moreover, our construction is ``just barely'' 3D in the sense that it works even when the placement of tiles is restricted to the $z = 0$ and $z = 1$ planes. This result affirmatively answers an open question from Cook, Fu, Schweller (SODA 2011). To achieve this result, we develop a general 3D temperature 1 optimal encoding construction, reminiscent of the 2D temperature 2 optimal encoding construction of Soloveichik and Winfree (SICOMP 2007), and perhaps of independent interest.
\end{abstract} 

\newpage

\section{Introduction}

The simplest mathematical model of nanoscale tile self-assembly is
Erik Winfree's abstract Tile Assembly Model (aTAM) \cite{Winf98}. The
aTAM extends classical Wang tiling \cite{Wang61} in that the former
bestows upon the latter a mechanism for sequential ``growth'' of a
tile assembly. Very briefly, in the aTAM, the fundamental components
are un-rotatable, translatable square ``tile types'' whose sides are
labeled with (alpha-numeric) glue ``colors'' and (integer)
``strengths''. Two tiles that are placed next to each other
\emph{bind} if both the glue colors and the strengths on their
abutting sides match and the sum of their matching strengths sum to at
least a certain (integer) ``temperature''. Self-assembly starts from a
``seed'' tile type, typically assumed to be placed at the origin, and
proceeds nondeterministically and asynchronously as tiles bind to the
seed-containing assembly one at a time. In this paper, we work in a
three-dimensional variant of the aTAM in which tile types are unit
cubes and growth proceeds in a \emph{noncooperative} manner.

Tile self-assembly in which tiles may be placed in a noncooperative
fashion is often referred to as ``temperature~1
self-assembly''. Despite the arcane name, this is a fundamental and
ubiquitous form of growth: it refers to growth from \emph{growing and
  branching tips} in Euclidean space, where each new tile is added if
it can bind on at least \emph{one side}. Note that a more general
form of \emph{cooperative} growth, where some of the tiles may be
required to bind on two or more sides, leads to highly non-trivial
behavior in the aTAM, e.g., Turing universality \cite{Winf98} and  the
efficient self-assembly of $N \times N$ squares
\cite{AdlemanCGH01,RotWin00} and other algorithmically specified
shapes \cite{SolWin07}. Doty, Patitz and Summers conjecture
\cite{jLSAT1} that the shape or pattern produced by any 2D temperature
1 tile set that uniquely produces a final structure is ``simple'' in
the sense of Presburger arithmetic \cite{Presburger30}. However, their
conjecture is currently unproven and it remains to be seen if
noncooperative self-assembly in the aTAM can achieve the same
computational and geometric expressiveness as that of cooperative
self-assembly. In this paper, we specifically focus on a problem that
is very closely related to that of finding the minimum number of
distinct tile types required to self-assemble an $N \times N$ square,
i.e., its \emph{tile complexity} (or \emph{program-size complexity}), at temperature 1.

The tile complexity of an $N \times N$ square at temperature 1 has
been studied extensively. In 2000, Rothemund and Winfree
\cite{RotWin00} proved that the tile complexity of an $N \times N$
square at temperature 1 is $N^2$, assuming the final structure is
fully connected, and at most $2N - 1$, otherwise (they also
conjectured that the lower bound, in general, is $2N - 1$). A decade
later, Manuch, Stacho and Stoll \cite{ManuchSS10} established that,
assuming no mismatches are present in the final assembly, the tile
complexity of an $N \times N$ square at temperature 1 is $2N -
1$. Shortly thereafter, and quite surprisingly, Cook, Fu and Schweller
\cite{CooFuSch11} showed that the tile complexity of an $N \times N$
square at temperature 1 is $O(\log N)$ if tiles are
allowed to be placed in the $z = 0$ and $z = 1$ planes (here, an $N
\times N$ square is actually a full 2D square in the $z=0$ plane with
additional tiles above it in the $z = 1$ plane).

Technically speaking, the aforementioned, just-barely-3D construction
of Cook, Fu and Schweller is actually a general transformation that
takes as input a 2D temperature 2 ``zig-zag'' tile set, say $T$, and
outputs a corresponding 3D temperature 1 tile set, say $T'$, that
simulates $T$. In this transformation from $T$ to $T'$, the tile
complexity 
increases by $O(\log g)$,
where $g$ is the number of unique north/south glues in the input tile
set $T$. Since the number of north/south glues in the standard 2D aTAM
base-$2$ binary counter is $O(1)$, Cook, Fu and Schweller use their
transformation to produce several tile sets, which, when wired
together appropriately and combined with ``filler'' tiles,
self-assemble into an $N \times N$ square at temperature 1 in 3D with
$O(\log N)$ tile complexity.

Of course, it is well-known that the tile complexity of an $N \times N$ square at temperature 2 is $O\left(\frac{\log N}{\log \log N}\right)$ \cite{AdlemanCGH01}, which, as Cook, Fu and Schweller point out in \cite{CooFuSch11}, is achievable using a zig-zag counter with an optimally-chosen base, say $b$, which satisfies $\frac{\log N}{\log \log N} < b < \frac{2 \log N}{\log \log N}$, rather than in base $b = 2$. However, using currently-known techniques, counting in base $b$ at temperature 2 requires having a tile set with $\Theta(b)$ unique north/south glues, whence the zig-zag transformation of Cook, Fu and Schweller cannot be used to get $O\left(\frac{\log N}{\log \log N}\right)$ tile complexity for an $N \times N$ square at temperature 1 in 3D. Moreover, the \emph{optimal encoding} scheme of Soloveichik and Winfree \cite{SolWin07} and the \emph{base conversion} technique of Adleman et. al. \cite{AdlemanCGH01} do not work correctly at temperature 1 and they also cannot be simulated by the Cook, Fu and Schweller construction without an $\Omega\left(\frac{\log N}{\log\log N}\right)$ blowup in tile complexity. Thus, Cook, Fu and Schweller, at the end of section 4.4 in \cite{CooFuSch11}, pose the following question: is it possible to achieve the tile complexity bound of $O\left(\frac{\log N}{\log \log N}\right)$ for an $N \times N$ square at temperature $1$ in 3D?

In Theorem~\ref{thm:main-theorem}, the main theorem of this paper, we answer the previous question in the affirmative, i.e., we prove that the tile complexity of an $N \times N$ square at temperature 1 in 3D is $O\left(\frac{\log N}{\log \log N}\right)$ (in our construction, tiles are placed only in the $z = 0$ and $z = 1$ planes of $\Z^3$). Our tile complexity matches a corresponding lower bound dictated by Kolmogorov complexity (see \cite{Li:1997:IKC} for details on Kolmogorov complexity), which was established by Rothemund and Winfree in 2000, and holds for all ``algorithmically random'' values of $N$ \cite{RotWin00}\footnote{Technically, Rothemund and Winfree established the 2D self-assembly case, but their proof easily generalizes to 3D self-assembly.}. Thus, our construction yields optimal tile complexity for the self-assembly of $N \times N$ squares at temperature 1 in 3D, for all algorithmically random values of $N$. To achieve optimal tile complexity, we adapt the optimal encoding technique of Soloveichik and Winfree \cite{SolWin07} (which, itself, is based on the base-conversion scheme of \cite{AdlemanCGH01}) to work at temperature 1 in 3D. Our 3D temperature 1 optimal encoding technique, described in Section~\ref{sec:optimal_encoding}, is perhaps of independent interest.

\section{Definitions}\label{sec-definitions}
In this section, we give a brief sketch of a $3$-dimensional version of Winfree's abstract Tile Assembly Model.

\subsection{3D abstract Tile Assembly Model}

Let $\Sigma$ be an alphabet. A $3$-dimensional \emph{tile type} is a tuple $t \in (\Sigma^* \times \N)^{6}$, e.g., a unit cube with six sides listed in some standardized order, each side having a \emph{glue} $g \in \Sigma^* \times \N$ consisting of a finite string \emph{label} and a non-negative integer \emph{strength}. In this paper, all glues have strength $1$.
There is a finite set $T$ of $3$-dimensional tile types but an infinite number of copies of each tile type, with each copy being referred to as a \emph{tile}.

A $3$-dimensional \emph{assembly} is a positioning of tiles on the
integer lattice $\Z^3$ and is described formally as a partial function
$\alpha:\Z^3 \dashrightarrow T$. Two adjacent tiles in an assembly
\emph{bind} if the glue labels on their abutting sides are equal and
have positive strength.  Each assembly induces a \emph{binding graph},
i.e., a grid graph whose vertices are (positions of) tiles and whose
edges connect any two vertices whose corresponding tiles bind.  If
$\tau$ is an integer, we say that an assembly is
\emph{$\tau$-stable} if every cut of its binding graph has strength at
least~$\tau$, where the strength of a cut is the sum of all of the
individual glue strengths in~the~cut.

A $3$-dimensional \emph{tile assembly system} (TAS) is a triple $\calT
= \left(T,\sigma,\tau\right)$, where $T$ is a finite set of tile
types, $\sigma:\Z^3 \dashrightarrow T$ is a finite, $\tau$-stable
\emph{seed assembly}, and $\tau$ is the \emph{temperature}. In this
paper, we assume that $|\dom{\sigma}| = 1$ and $\tau=1$. An assembly
$\alpha$ is \emph{producible} if either $\alpha = \sigma$ or if
$\beta$ is a producible assembly and $\alpha$ can be obtained from
$\beta$ by the stable binding of a single tile.  In this case we write
$\beta\to_1^{\calT} \alpha$ (to mean~$\alpha$ is producible from
$\beta$ by the binding of one tile), and we write $\beta\to^{\calT}
\alpha$ if $\beta \to_1^{\calT^*} \alpha$ (to mean $\alpha$ is
producible from $\beta$ by the binding of zero or more tiles).
When~$\calT$ is clear from context, we may write $\to_1$ and $\to$
instead.  We let $\mathcal{A}\left[\mathcal{T}\right]$ denote the set
of producible assemblies of~$\calT$.  An assembly is \emph{terminal}
if no tile can be $\tau$-stably bound to it.  We
let~$\mathcal{A}_{\Box}\left[\mathcal{T}\right] \subseteq
\mathcal{A}\left[\mathcal{T}\right]$ denote the set of
producible, terminal assemblies of $\calT$.

A TAS~$\calT$ is \emph{directed} if $\left|\mathcal{A}_{\Box}\left[\calT\right]\right| = 1$. Hence, although a directed system may be nondeterministic in terms of the order of tile placements,  it is deterministic in the sense that exactly one terminal assembly is producible. For a set $X \subseteq \Z^3$, we say that $X$ is uniquely produced if there is a directed TAS $\mathcal{T}$, with $\mathcal{A}_{\Box}\left[\calT\right] = \{\alpha\}$, and $\dom{\alpha} = X$.

For $N \in \mathbb{N}$, we say that $S^3_N \subseteq \mathbb{Z}^3$ is a 3D $N \times N$ \emph{square} if $\{0,\ldots,N-1\} \times \{0,\ldots,N-1\} \times \{0\} \subseteq S^3_N \subseteq \{0,\ldots,N-1\} \times \{0,\ldots,N-1\} \times \{0,1\}$. In other words, a 3D $N \times N$ square is at most two 2D $N \times N$ squares, one stacked on top of the other. \begin{symbolfootnotes}
In the spirit of \cite{RotWin00}, we define the \emph{tile complexity} of a 3D $N \times N$ square at temperature $\tau$, denoted by $K^\tau_{3DSA}(N)$, as the minimum number of distinct 3D tile types required to uniquely produce it, i.e., $K^\tau_{3DSA}(N) = \min \left\{ n   \; \left| \; \mathcal{T} = \left(T,\sigma,\tau\right), \left|T\right|=n \textmd{ and } \mathcal{T} \textmd{ uniquely produces } S^3_N\right.\right\}\footnote{One subtle difference between our 3D definition of $K$ and the original 2D definition of the tile complexity of an $N \times N$ square, given by Rothemund and Winfree in \cite{RotWin00}, is that they assume a fully-connected final structure, whereas we do not.}$.
\end{symbolfootnotes}

\subsection{Notation for figures}

In the figures in this paper, we use big squares to represent tiles placed in the $z=0$ plane and small squares to represent tiles placed in the $z=1$ plane. A glue between a $z=0$ tile and $z=1$ tile is denoted as a small black disk. Glues between $z=0$ tiles are denoted as thick lines. Glues between $z=1$ tiles are denoted as thin lines.

\section{Optimal encoding at temperature 1}
\label{sec:optimal_encoding}

A key problem in algorithmic self-assembly is that of \emph{providing input to a tile assembly system} (e.g., the size of a square, the input to a Turing machine, etc.). In real-world laboratory implementations, as well as theoretical constructions, input to a tile system is typically provided via a (possibly large) collection of ``hard-coded'' seed tile types that uniquely assemble into a convenient ``seed structure,'' such as a line of tiles that encodes some input value. Unfortunately, in practice, it is more expensive to manufacture different types of tiles than it is to create copies of each tile type. Thus, it is critical to be able to provide input to a tile system using the smallest possible number of hard-coded seed tile types.

Consider the problem of constructing a tile set that uniquely self-assembles from a single seed tile into a ``seed row'' that encodes an $n$-bit binary string, say $x$. The most straightforward way to do this is to construct a set of $n$ unique tile types that deterministically assemble into a line of tiles of length $n$, where each tile in the line represents a different bit of $x$. This simple construction encodes one bit of $x$ per tile, whence its tile complexity is $O(n)$. Note that, in this example, each tile type is an element of a set of size $n$, yet each tile type encodes only 1 bit of information, instead of the optimal $O(\log n)$ bits. Is there a more efficient encoding construction?


The optimal encoding constructions of Adleman et al. \cite{AdlemanCGH01}, and Soloveichik and Winfree \cite{SolWin07} are more efficient methods of encoding input to a tile set. These constructions are based on the idea that each seed row tile type should encode $k = O(\log n)$ bits -- instead of a single bit -- of $x$, which means that $O(n / \log n)$ unique tile types suffice to uniquely self-assemble into a seed row that encodes the bits of $x$. Unfortunately, now the bits of $x$ are no longer conveniently represented in distinct tiles. Fortunately, if $k$ is chosen carefully, then it is possible to use a tile set of size $O(n / \log n)$ to ``extract'' the bits of $x$ into a more convenient one-bit-per-tile representation, which can be used to seed a binary counter or a Turing machine simulation.

Up until now, all known optimal encoding constructions (e.g., \cite{AdlemanCGH01, SolWin07}) required cooperative binding (that is, temperature $\tau \geq 2$). In what follows, we propose an optimal encoding construction (based on the construction of Soloveichik and Winfree \cite{SolWin07}) that works at temperature $\tau=1$ and is ``just barely'' 3D, i.e., tiles are only placed in the $z=0$ and $z=1$ planes.

\subsection{Setup}

Let $x = x_{n-1}x_{n-2}...x_1x_0$ be the input string, where $x_i \in \{0,1\}$. Let $m = \left\lceil n / k \right\rceil$, where $k$ is the smallest integer satisfying $2^k \geq n/\log n$. We write $x = w_{0}w_{1}...w_{m-2}w_{m-1}$, where each $w_i$ is a $k$-bit block. Note that $w_{0}$ is padded to the left with leading 0's, if necessary. In the figures in this section, a green tile represents a starting point for some portion of an assembly sequence, and a red tile represents an ending point.


\subsection{Overview of the construction}

We extract each of the $m$ $k$-bit blocks within a roughly rectangular
region of space of width $O(k)$ and height $O(m)$.  We
refer to this region of space as a ``block extraction region'' (or simply ``extraction region'').
For each $0 \leq i < m$,
we extract block $w_i$ in extraction region $i$.
Each extraction region, other than the first and last ones, assembles via a series of gadgets (small groups of tiles
that carry out a specific task).

We encode the $k$ bits of a $k$-bit block as a series of
geometric bumps along a path of tiles that makes up the top
border of an extraction region. A bump in the $z=0$ plane represents
the bit 0 and a bump in the $z=1$ plane represents the bit 1.
The end result of our construction is an assembly in which each bit of $x$ is encoded in its own bit-bump (see Figure~\ref{fig:twelve} for an example).

We extract the $k$-bit blocks
in order, starting with the first block $w_{0}$, which represents
the most significant bits of $x$. Normally, to carry out this
sort of activity at temperature 1 (i.e., to enforce the ordering of
tile placements), one has to encode the order of placement directly into
the glues of the tiles. However, for our construction, this would essentially mean
encoding the number of the block that is being extracted
into the glues of the tiles that fill in its extraction region. Unfortunately, doing so, at least in
the most straightforward way, results
in an increase in tile complexity from the optimal $O(n / \log n)$ to
$\Omega(n^2 / \log^2 n)$.

Therefore, in our construction, we encode the number of the block that
is being extracted as a geometric pattern along a vertical path of
tiles that runs along the right side of each extraction region. We
call this special geometric pattern the ``block number.'' Then we use a
special gadget called the ``block-number gadget'' to search for this
pattern.

\begin{SCfigure}
\centering
 \includegraphics[width=0.4\textwidth]{./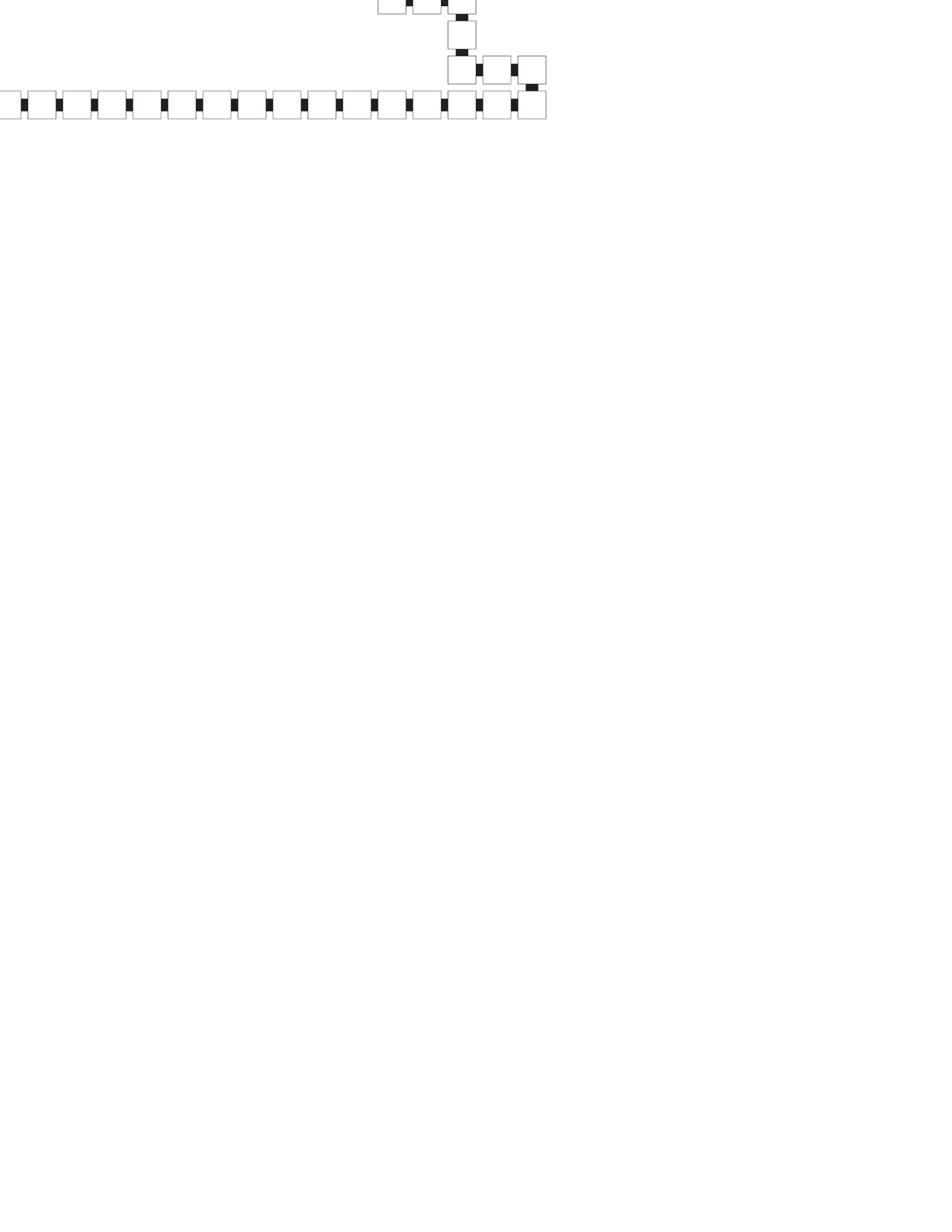}
\caption{ The perimeter of the first extraction region is hard-coded to self-assemble like this. In this example, the four bumps along the top (from left to right) represent the bits 1, 0, 0 and 1, respectively. The green tile (bottom tile in the penultimate column) is the single seed tile for our entire optimal encoding construction.  }
\label{fig:first}
\vspace{0pt}
\end{SCfigure}

Within an extraction region, the block number determines which block
gets extracted next. Basically, the path along which the block number
is encoded blocks the placement of $m - 1$ special tiles, each
of which tries to initiate the extraction of a particular $k$-bit
block. We call these special tiles ``extraction tiles.'' Since the
first extraction region is hard-coded (see below), the first block
does not have an extraction tile associated with it.  Within any given extraction
region, exactly one extraction tile will not be blocked. The one extraction tile that is not blocked by the block number gadget will initiate
the extraction of the $k$ bits of the block to which it corresponds.

\begin{figure}[htp]%
\centering
    \subfloat[][The block-number gadget determines the next block to extract by ``searching'' for the position of the block number (i.e., the position of the notch in which the red tile is ultimately placed).]{%
        \label{fig:square0}%
        \includegraphics[width=1.5in]{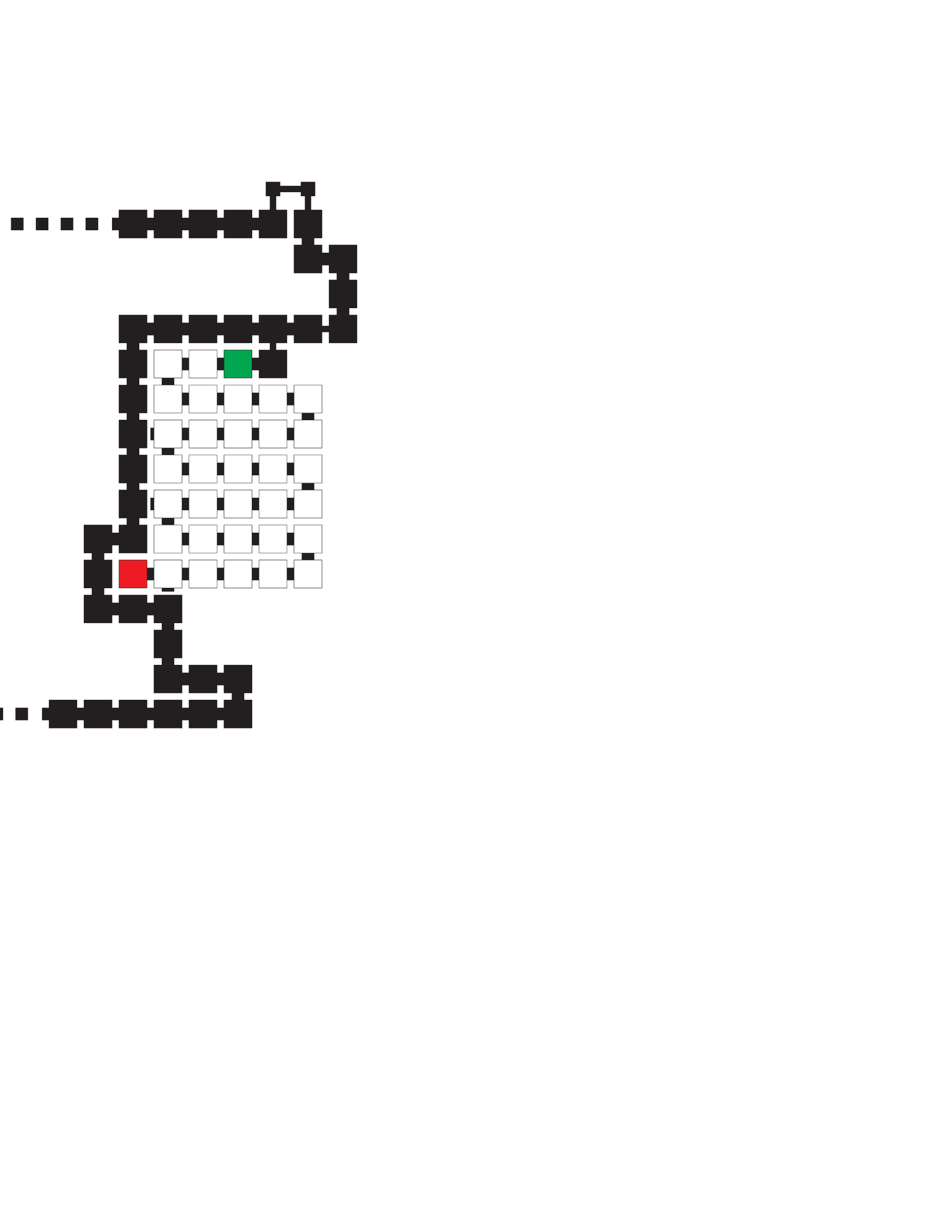}}%
        \hspace{10pt}%
    \subfloat[][The path initiated by the extraction tile for $w_1$ ``jumps'' over the block-number gadget and grows a hook to block a subsequent gadget.]{%
        \label{fig:square1}%
        \includegraphics[width=1.5in]{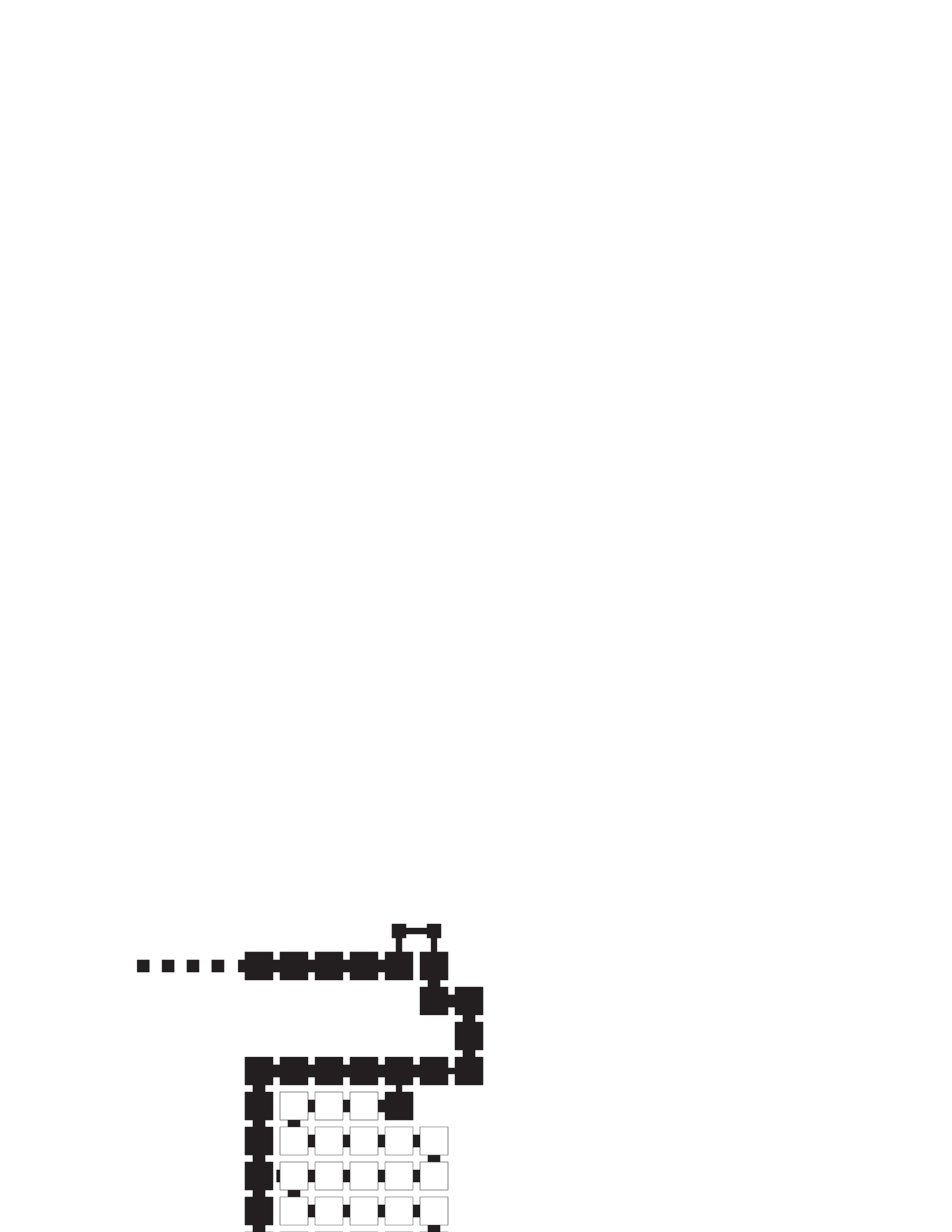}}
        \hspace{10pt}
    \subfloat[][The path initiated by the extraction tile for $w_1$ continues growing upward and eventually finds the top of the block-number gadget. The upward growth of this path is blocked by a portion of the previous extraction region.]{%
        \label{fig:square2}%
        \includegraphics[width=1.5in]{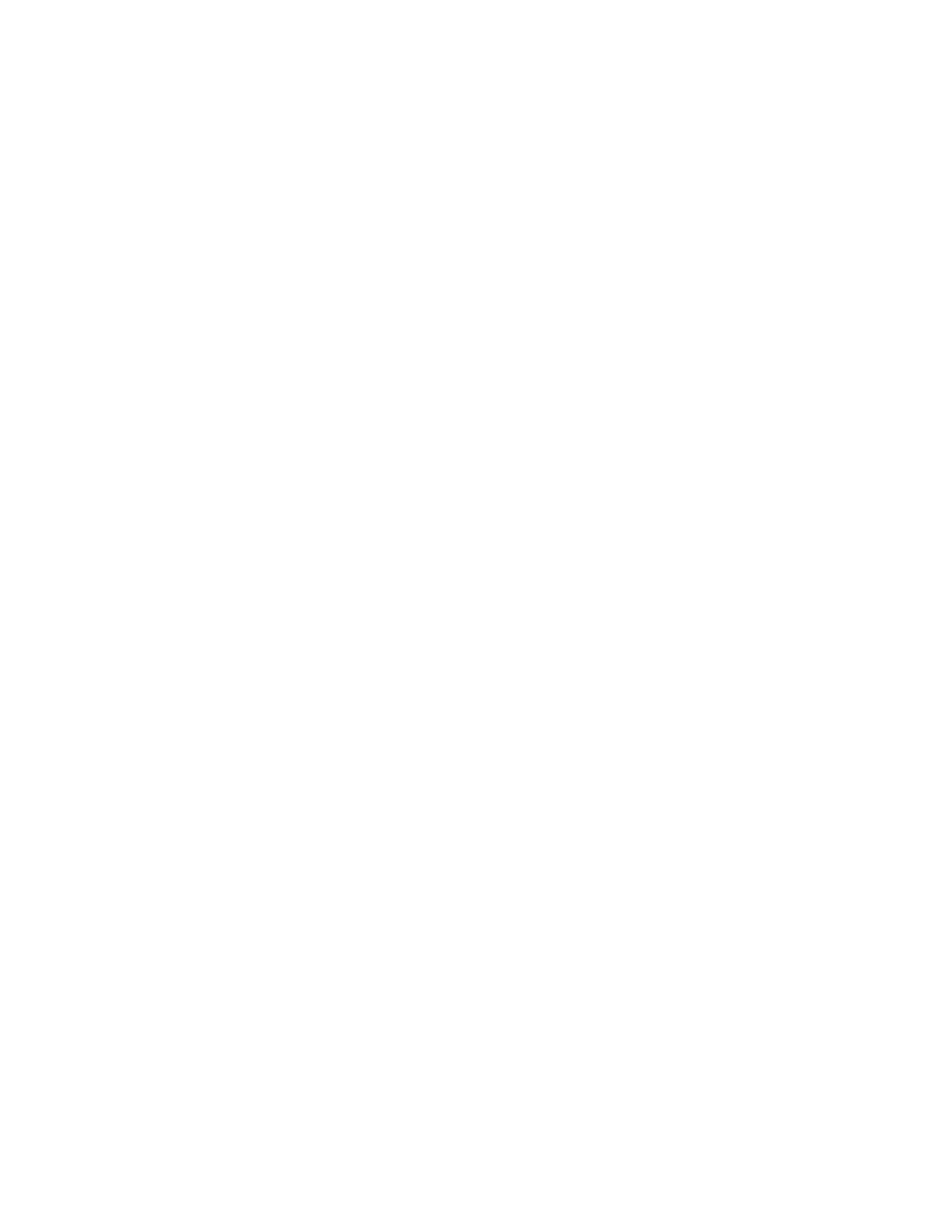}}
        \hspace{10pt}
    \subfloat[][Once at the top of the block-number gadget, the path initiated by the extraction tile for $w_1$ ``jumps'' over a portion of the previous extraction region and starts extracting the bits of $w_1$ along the top of the second extraction region.]{%
        \label{fig:square3}%
        \includegraphics[width=1.5in]{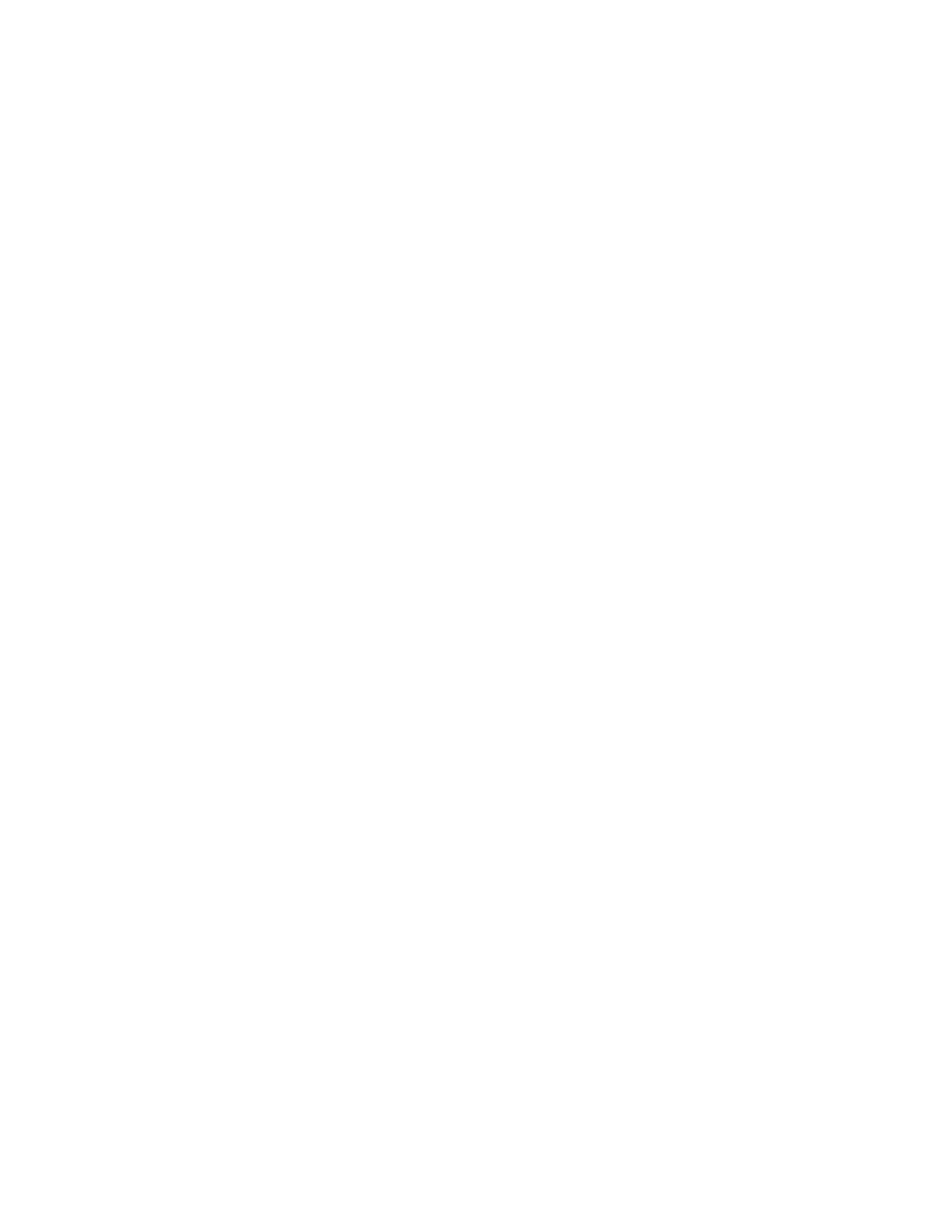}}
    \caption{\label{fig:block_path} This sequence of figures shows how the position of the block number is found. The black tiles correspond to tiles of the previous extraction region. }%
\end{figure}

In our construction, we hard-code the assembly of the first and last
extraction regions.  What this means is that, in each of these
extraction regions, a single-tile-wide path assembles the perimeter
and then we use a filler tile to fill in the interior. For this step, it is crucial to first assemble the perimeter of the extraction region and then use a filler tile to tile the interior. Note that, if one were to uniquely tile every location in the first (or last) extraction region, then the tile complexity of the construction would be $\Omega(mk)$, which is not optimal. Tiling the perimeter of either the first or last extraction region can be done with $O(m + k)$ unique tile types (see
Figure~\ref{fig:first} for the example of the first extraction region).

All extraction regions other than the first and last ones are
constructed using a general set of gadgets. In the second extraction
region, which is the first generally-constructed extraction region,
the block-number gadget determines that $w_1$ is the next block to be
extracted by ``searching'' for the block number position. When the block number is
found, a path of tiles, initiated by the extraction tile for $w_1$, is allowed to assemble (see
Figure~\ref{fig:block_path} for an example of this process). In
general, for extraction region $i$, for all $1 \leq i < m - 1$, the
path along which the block number is encoded geometrically hinders
the placement of all extraction tiles that correspond to blocks
$w_1, ..., w_{i-1}, w_{i+1}, ..., w_{m-2}$.

Each extraction tile initiates the extraction of the $k$-bit block to which it corresponds (see Figure~\ref{fig:five}). We use a set of ``bit-extraction'' gadgets to extract a $k$-bit block into a one-bit-per-bump representation (the bit extraction gadgets are collectively referred to as the ``extraction gadget''). Our bit extraction gadgets are basically 3D, temperature 1 versions of the ``extract bit'' tile types in Figure 5.7a of \cite{SolWin07}.

\begin{SCfigure}
\includegraphics[width=.6\textwidth]{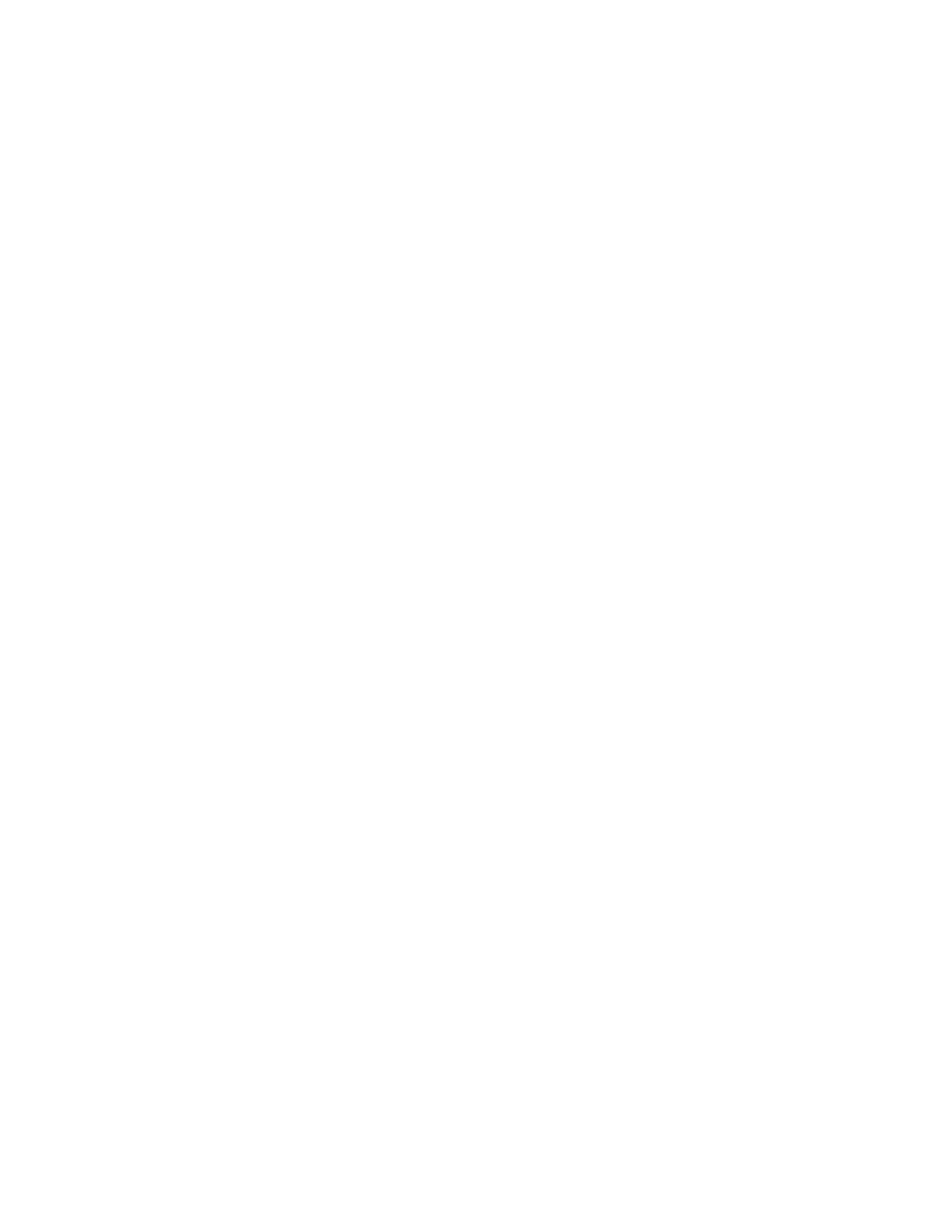}
\caption{\label{fig:five} The bits of the current block are represented as bumps along the top of the extraction region that is currently being assembled. }
\end{SCfigure}

After a block, say $w_{i}$, for $i > 0$, is extracted, the block
number is geometrically ``incremented'', i.e., its position is
translated up by a small constant amount (notice the position of the
white ``hook'' at the bottom of Figure~\ref{fig:five}). We do this in
two phases. First, the current position of the block number is found
and then it is incremented and translated. Figure~\ref{fig:six} shows
how the current position of the block number is detected using a
zig-zag path of tiles. Figure~\ref{fig:seven} shows how the current
position of the block number is geometrically incremented.

\begin{SCfigure}
\includegraphics[width=.6\textwidth]{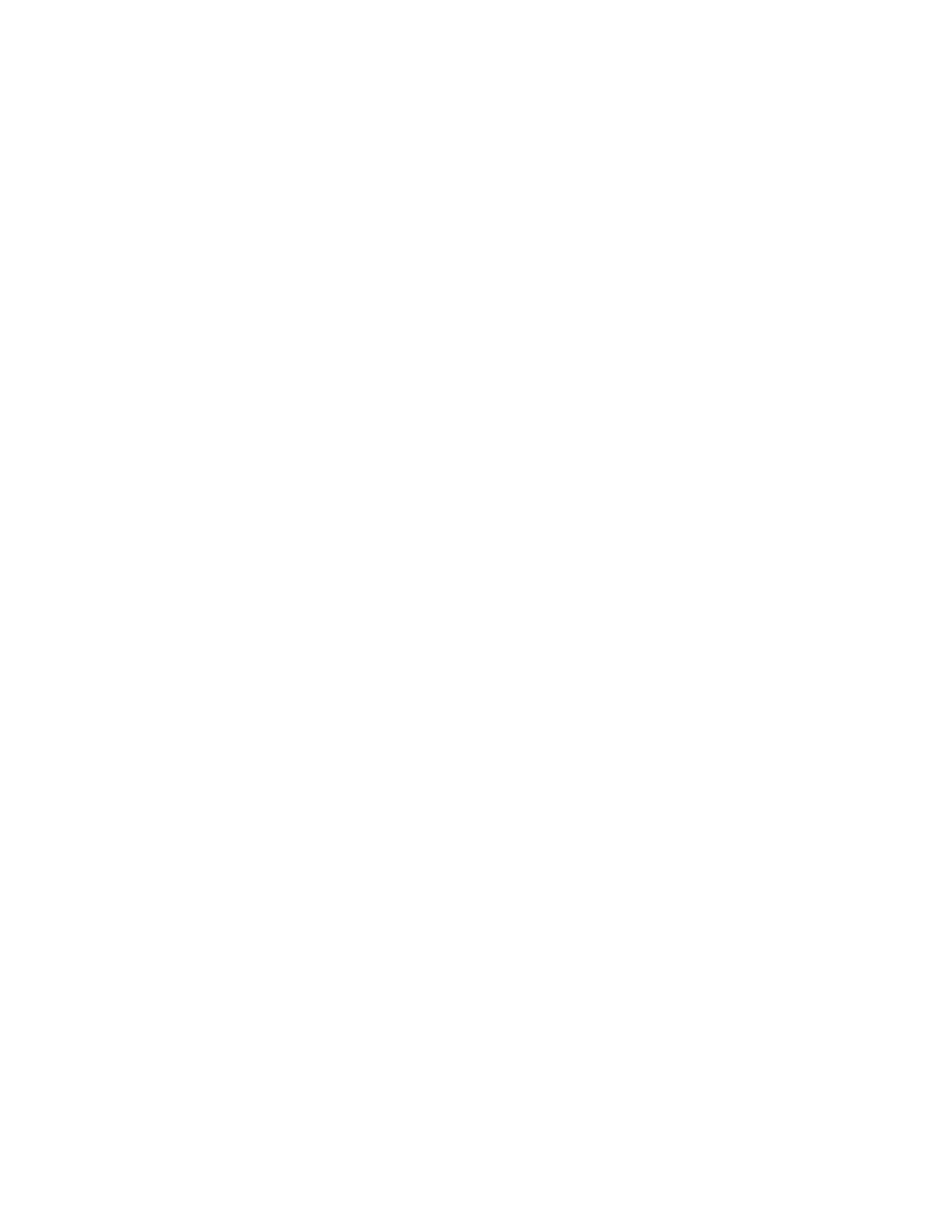}
\caption{\label{fig:six} A path of tiles searches for the block number, represented by a notch in a previous portion of the assembly. The red tile ``knows'' that it found the position of the block number because it was allowed to be placed.}
\end{SCfigure}

\begin{SCfigure}
\includegraphics[width=.6\textwidth]{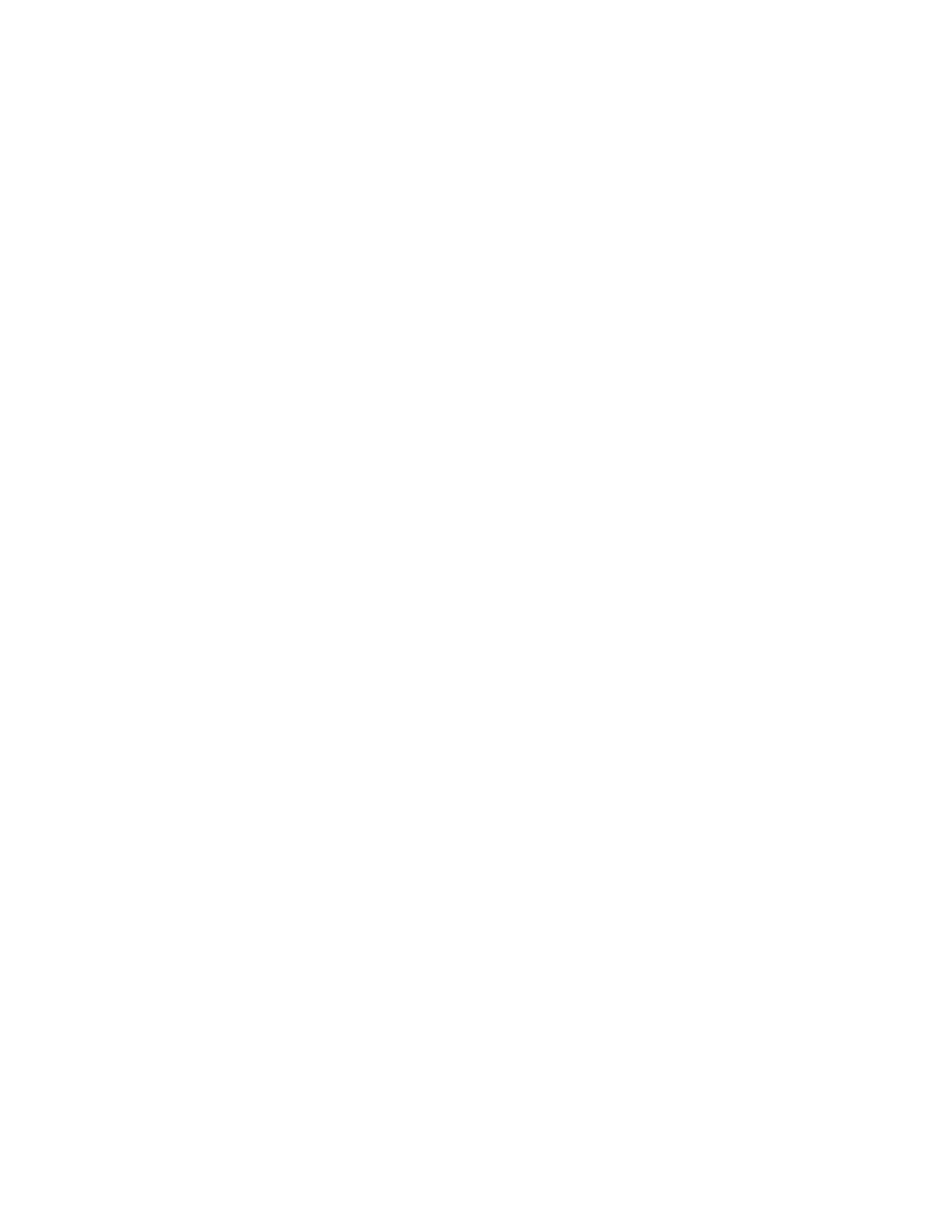}
\caption{\label{fig:seven} The block number is geometrically incremented. The green tile ``jumps'' over the previous gadget that found the position of the block number and grows a hook of tiles to represent the updated block number. Notice that the new hook of tiles is two tiles higher than the previous hook (shown in black), which corresponds to the two rows of tiles that each block takes up in the block-number gadget.}
\end{SCfigure}

After the block number has been updated, a series of gadgets geometrically propagate the position of the block number to the right through the remainder of extraction region $i$ so that it is advertised to extraction region $i + 1$. This is shown in Figures~\ref{fig:eight} and~\ref{fig:nine}. Technically, we geometrically propagate the block number position through the rest of the extraction region using a series of gadgets. Logically, however, we do this in two phases, which are iterated: ``up'' propagation and ``down'' propagation.

\begin{SCfigure}
\includegraphics[width=.6\textwidth]{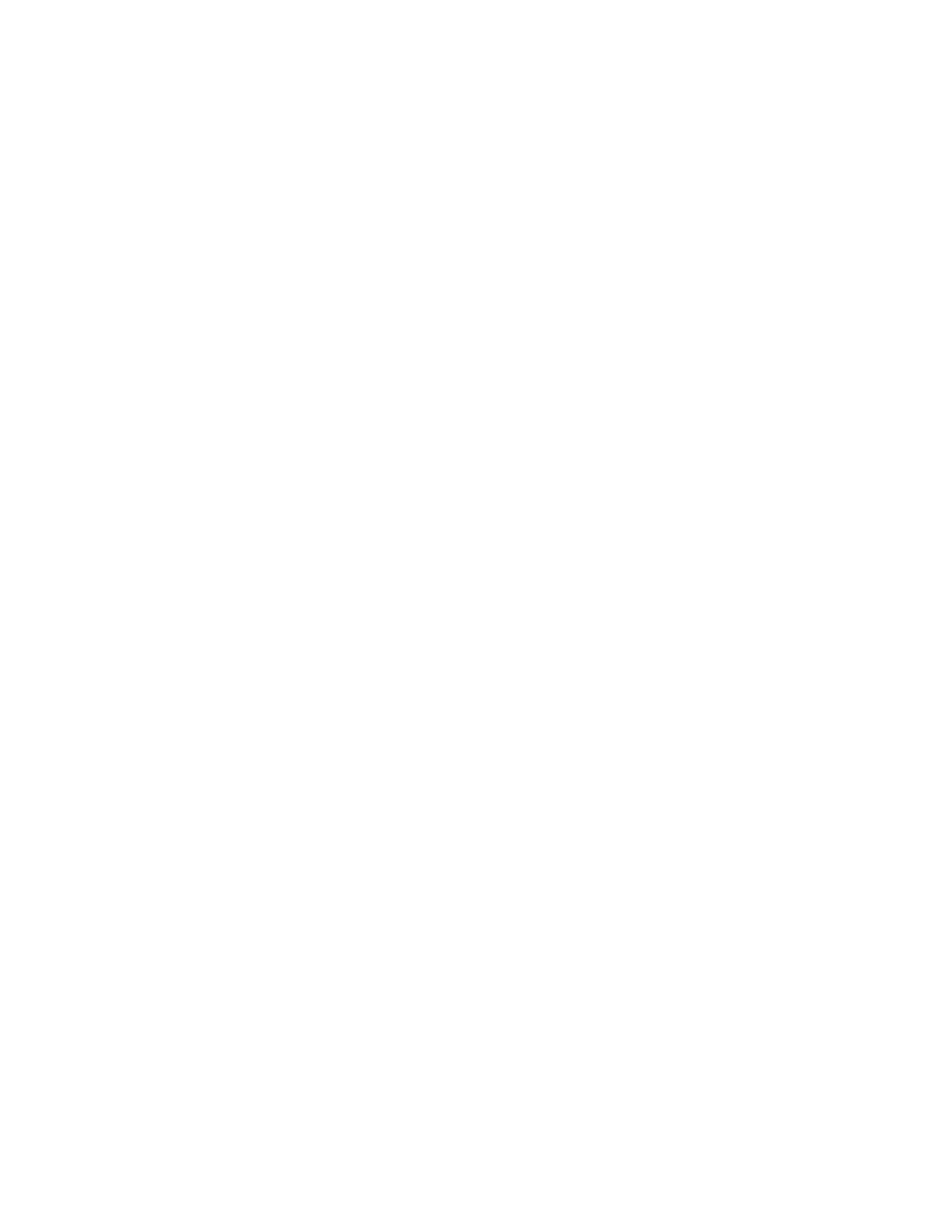}
\caption{\label{fig:eight} A series of gadgets geometrically propagate
  the position of the block number through the rest of the extraction
  region. This figure shows two of the gadgets. The first one
  assembles upward until it is blocked by a previous portion of the
  assembly. The second one assembles horizontally and to the right
as it jumps over the top row of the previous gadget.}
\end{SCfigure}

\begin{SCfigure}
\includegraphics[width=.6\textwidth]{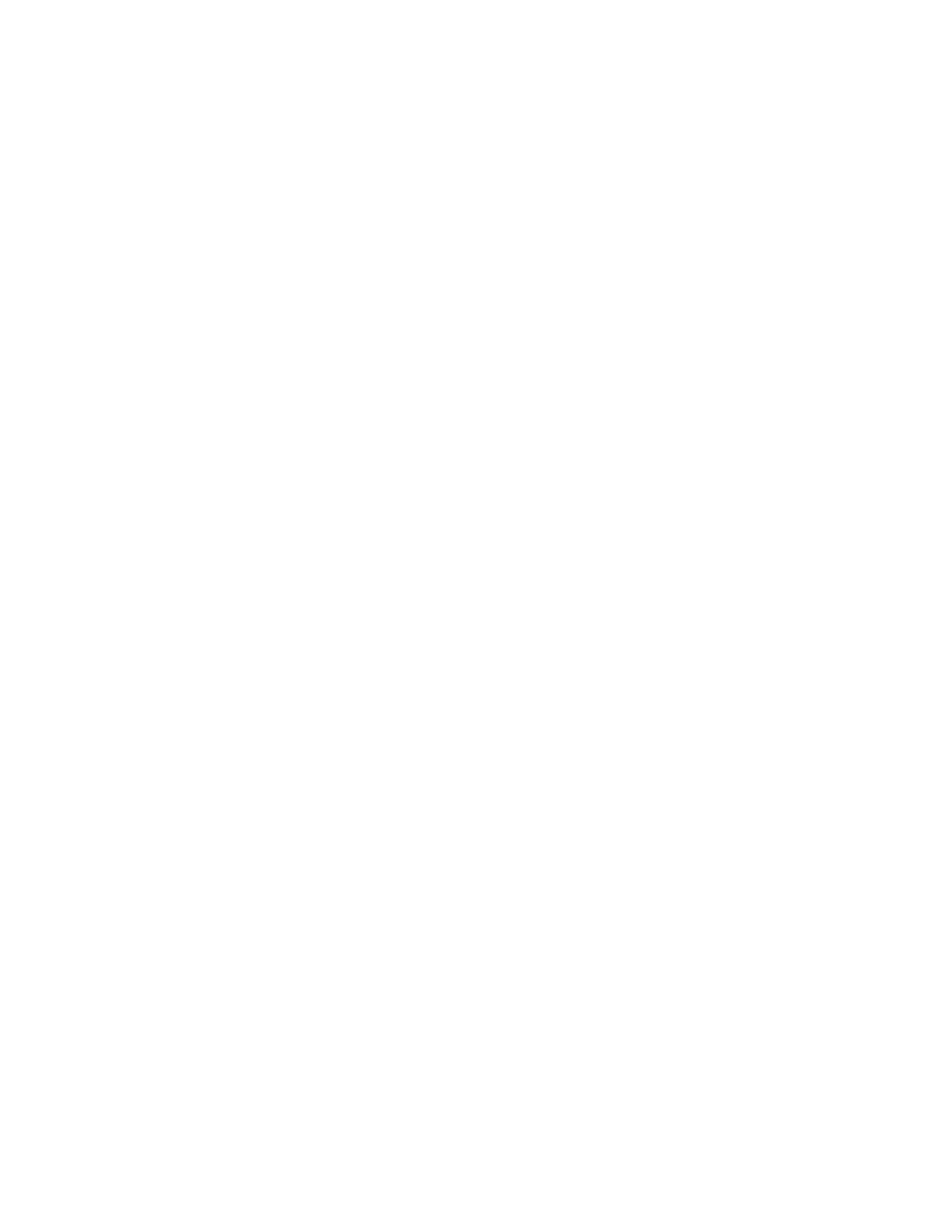}
\caption{\label{fig:nine} The position of the block number is propagated through the rest of the extraction region.}
\end{SCfigure}

The ``up'' propagation phase grows from the position of the block number up to (and is blocked by) a previous portion of the assembly. This is shown in Figure~\ref{fig:eight}. The ``down'' propagation phase grows from the top of the previous (up) propagation phase back down to the position of the block number. The upward growth of each up propagation phase is blocked in the $z=0$ plane but not in the $z=1$ plane. However, this is switched for the last up propagation phase. In other words, the last up propagation phase may continue its upward growth, which signals the end of the extraction region, but its $z=1$ growth is blocked. In Figure~\ref{fig:ten}, the last up propagation phase is allowed to continue its upward growth in the $z=0$ plane.

\begin{SCfigure}
\includegraphics[width=.6\textwidth]{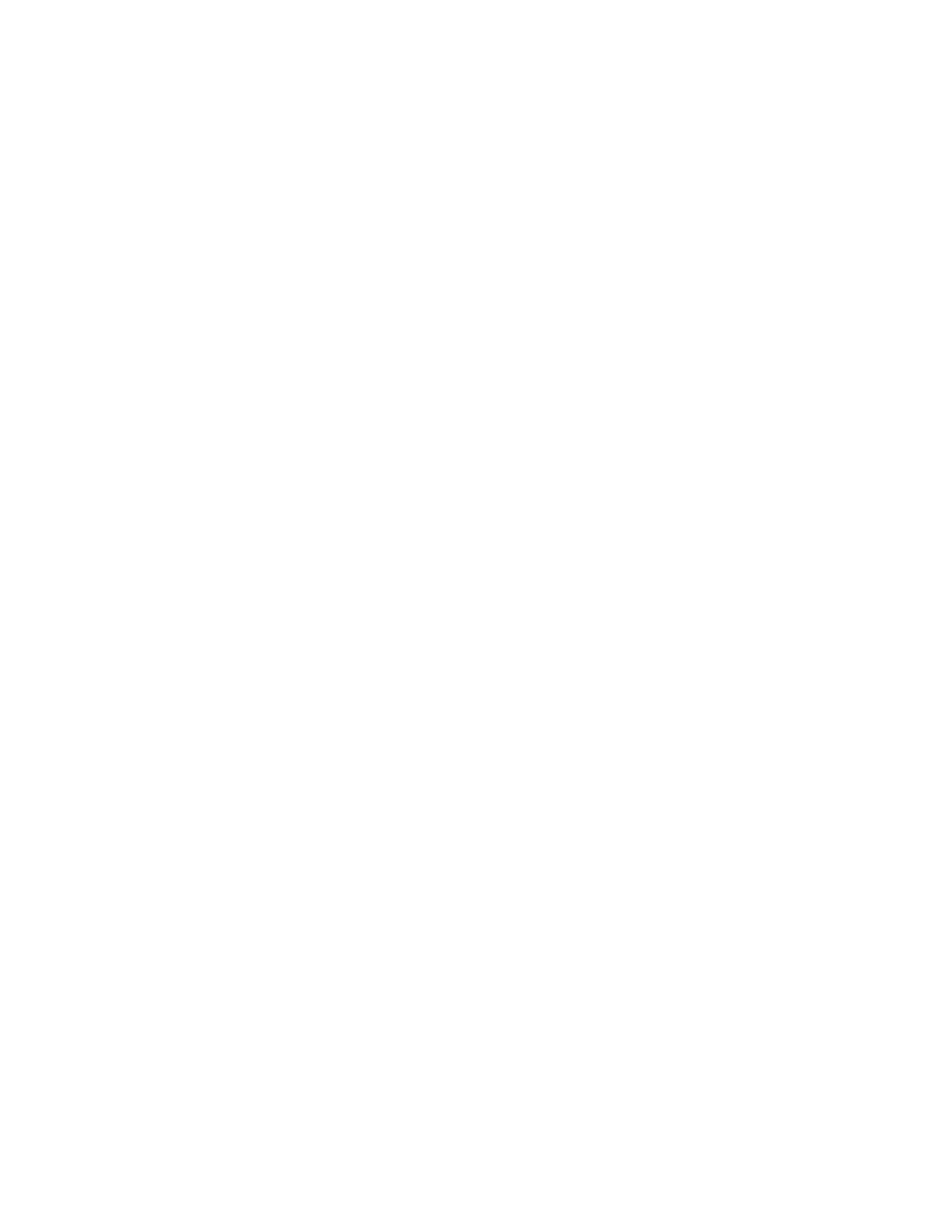}
\caption{\label{fig:ten} The last up propagation phase detects when it has reached the end of the extraction region and initiates a perimeter gadget (see Figure~\ref{fig:eleven}) that will fill in the bottom row of the current extraction region before the next extraction region begins. }
\end{SCfigure}

The last up propagation phase initiates the assembly of a special gadget that fills in the bottom row of the current extraction region before the next extraction region begins. The reason we do this is to ensure that, when the entire extraction process is done (i.e., when all $n$ bits have been extracted into a one-bit-per-bump representation), the bottom row of the assembly is completely filled in. Figure~\ref{fig:eleven} shows an example of how this gadget tiles the remaining perimeter of an extraction region. Note that the tile complexity of this gadget is the size of the perimeter of an extraction region, i.e., $O(m)$.

\begin{SCfigure}
\includegraphics[width=.6\textwidth]{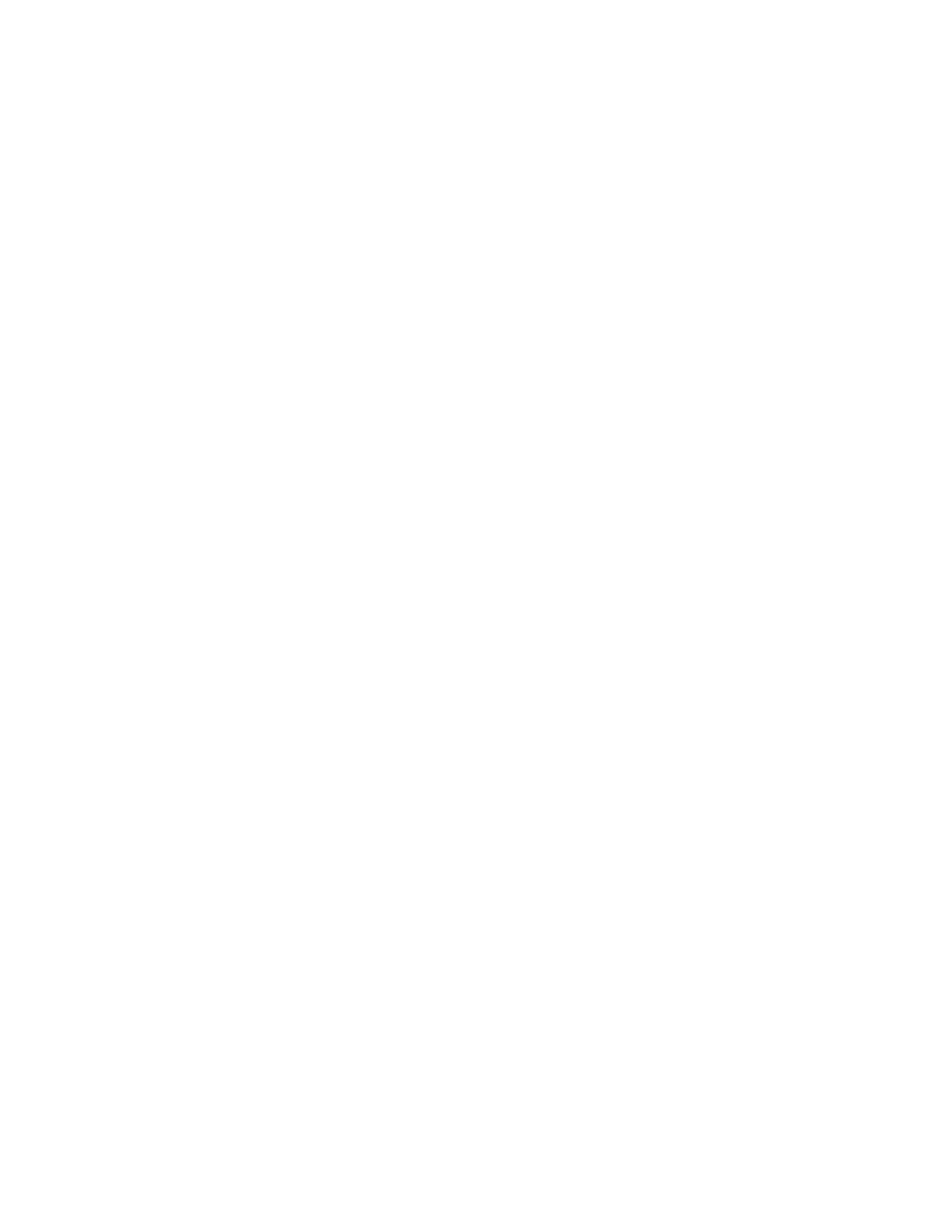}
\caption{\label{fig:eleven} The bottom row of the extraction region is tiled by a special gadget with $O(m)$ tile complexity. After the bottom row of the extraction region is tiled, the next extraction region is initiated. Notice that the red tile in this figure belongs to the same row of tiles as the red tile in Figure~\ref{fig:first} but the position of the block number has moved up, which means the extraction tile for the next block (in this case, $w_2$) will be allowed to assemble and all other extraction tiles will be blocked. }
\end{SCfigure}

The final extraction region, like the initial extraction region, is
hard-coded to assemble its perimeter via a single-tile-wide path.  The
tiles that comprise the final extraction region ``know'' to stop the
extraction process and possibly initiate the growth of some other
logical component of a larger assembly, e.g., a binary counter or a Turing
machine simulation in which the extracted bits of $x$, along the top
of each of the $m$ extraction regions, are used as input.

The end result of our optimal encoding construction is a roughly rectangular assembly of tiles with height $O(m)$ and width $O(n)$, where each bit of $x$ is encoded as a bump (either in the $z=0$ or $z=1$ plane) along the top of the rectangle, with four ``spacer'' tiles to the left and right of each bit-bump. Figure~\ref{fig:twelve} shows the result of our optimal encoding construction with four extraction regions. Complete details for this construction are in the appendix (see Section~\ref{app:optimal_encoding}).

\begin{figure}[htp]
\includegraphics[width=\textwidth]{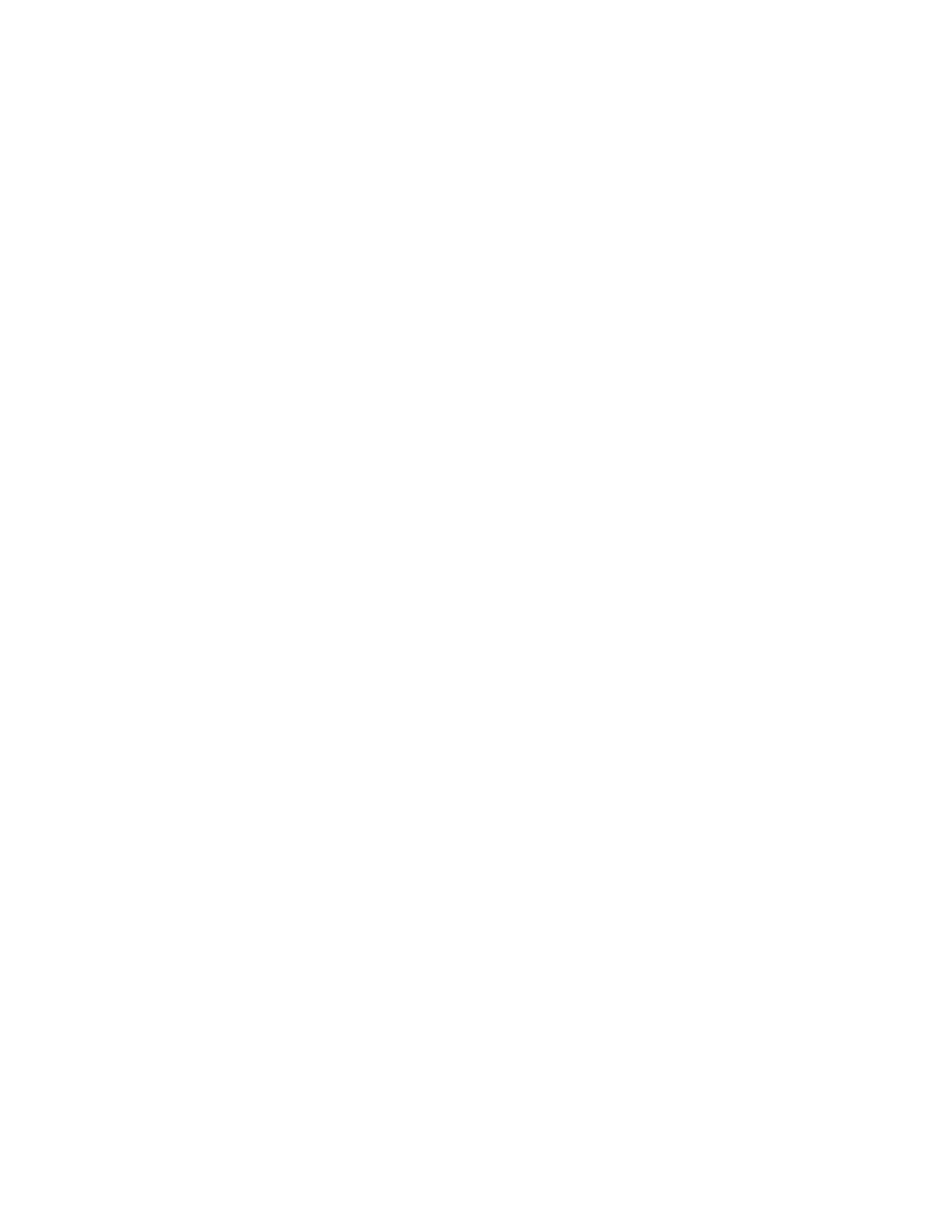}
\caption{\label{fig:twelve} This is an example of our optimal encoding construction using $n=16$ and $k=4$. Note that this does not correspond to an actual instance of our optimal encoding construction because if $n=16$, then the smallest value of $k$ satisfying $2^k \geq n/\log n$ is $k=2$. The bit string encoded along the top is $1001001100111000$. All of the empty spaces in the $z=0$ plane are filled in with the same filler tile. }
\end{figure}

\subsection{Tile complexity}\label{sec:OER_tile_complexity}
To establish the tile complexity bound of $O(n/ \log n)$ for our construction, we use the following technical lemma.

\begin{lemma}\label{lem:m_technical_lemma} Let $1 < n \in \Z^+$ and $m = \left\lceil n / k\right\rceil$, where $k$ is the smallest integer satisfying $2^k \geq n/\log n$. Then $m = O(n / \log n)$.
\end{lemma}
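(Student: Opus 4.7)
The plan is to bound $m$ by explicitly bounding $k$ from below and then comparing asymptotics. I would first unpack the definition: since $k$ is the smallest integer with $2^k \geq n/\log n$, minimality forces $2^{k-1} < n/\log n$ as well. Taking logarithms of the defining inequality yields $k \geq \log(n/\log n) = \log n - \log\log n$.

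Next I would use this lower bound on $k$ to upper-bound $m$. Since $m = \lceil n/k \rceil \leq n/k + 1$, substitution gives
\[
m \;\leq\; \frac{n}{\log n - \log\log n} + 1.
\]
The remaining task is to verify that the right-hand side is $O(n/\log n)$. I would do this by writing
\[
\frac{n}{\log n - \log\log n} \;=\; \frac{n}{\log n}\cdot\frac{1}{1 - \frac{\log\log n}{\log n}},
\]
and observing that $\log\log n / \log n \to 0$, so the second factor is bounded above by a constant (e.g.\ by $2$ for all sufficiently large $n$). Therefore $m = O(n/\log n)$, as required.

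I don't anticipate a real obstacle here; the only subtlety is handling small values of $n$ (where $\log\log n$ could be $\leq 0$ or the denominator could be ill-behaved), which I would absorb into the hidden constant by noting that the bound $m = O(n/\log n)$ only needs to hold asymptotically, and for the finitely many small $n$ excluded from the asymptotic regime, the bound holds trivially by choosing the constant large enough. A sentence or two disposing of these edge cases should suffice.
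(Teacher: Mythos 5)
Your proof is correct and follows essentially the same approach as the paper's: both derive the lower bound $k \geq \log(n/\log n) = \log n - \log\log n$ and then bound $n/(\log n - \log\log n)$ by a constant multiple of $n/\log n$. The only cosmetic difference is in the final step: the paper uses the explicit ad hoc inequality $\log\log n < (\log n)/2$ (valid for large $n$) to extract a concrete constant of $3$, whereas you factor out $n/\log n$ and invoke $\log\log n / \log n \to 0$, which is cleaner and makes the small-$n$ caveat more transparent.
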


\begin{proof}
Assume the hypothesis. First, note that, by our choice of $k$, $2^k < 2n / \log n$. Then we have
\begin{eqnarray*}
m & = & \left\lceil \frac{n}{k} \right\rceil \leq \left\lceil n/\log(n/\log n) \right\rceil  = \left\lceil n/(\log n - \log \log n) \right\rceil < \left\lceil n/(\log n - (\log n) / 2) \right\rceil \\	
  & \leq & 2n/\log n + 1 \leq 2n/\log n + n/\log n = 3n/\log n = O(n / \log n)
\end{eqnarray*}
\end{proof}

The tile complexity of our construction is the sum of the tile complexities
of all of the gadgets that assemble all extraction regions.

We will first analyze the tile complexity of the extract bit gadgets. Recall that these gadgets convert a $k$-bit binary string, encoded as a strength-$1$ glue, into a one-bit-per-bump representation. The first bit-extraction gadget accepts a $k$-bit binary string, converts the most significant bit of the block into the appropriate bump and then outputs a $(k-1)$-bit binary string. The latter is the input for the second bit-extraction gadget. This process is iterated $k$ times (once for each bit). For a given $n$ and our choice of $k$ (as described above), the number of distinct extract bit gadgets needed in our construction can be computed as:
\begin{eqnarray*}
2\left(\left|\{0,1\}^0\right| + \left|\{0,1\}^1 \right| + \cdots + \left|\{0,1\}^{k-1}\right|\right) & = & 2\left(1+2+\cdots+2^{k-1}\right) \\
							& = & 2\left(2^k - 1\right) < 2\cdot 2^k < 4n/\log n = O\left(n/\log n\right)
\end{eqnarray*}
Since each bit-extraction gadget is comprised of $O(1)$ unique tile types, the total tile complexity for the extraction gadgets is $O(n / \log n)$.

It is easy to see that all other gadgets in our construction can be
implemented using $O(m)$ unique tile types (see Section~\ref{app:optimal_encoding} for details). Thus, by
Lemma~\ref{lem:m_technical_lemma}, the tile complexity of our
construction is $O(n/\log n)$, which is optimal for all
algorithmically random values of $n$.

\section{Optimal self-assembly of squares at temperature 1 in 3D}
\label{sec:construction}
In this section, we describe how to use our 3D temperature 1 optimal encoding construction to prove the following theorem.

\begin{theorem}
\label{thm:main-theorem}
$K^{1}_{3DSA}(N) = O\left(\frac{\log N}{\log \log N}\right)$.
\end{theorem}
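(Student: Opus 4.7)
The plan is to combine the 3D temperature~1 optimal encoding construction from Section~\ref{sec:optimal_encoding} with a standard base-2 zig-zag binary counter simulated via the Cook--Fu--Schweller (CFS) transformation. This mirrors the 2D temperature~2 construction of Adleman et al.~\cite{AdlemanCGH01} and Soloveichik and Winfree~\cite{SolWin07}, in which an optimal encoding (requiring cooperation) efficiently seeds a base-2 binary counter (also requiring cooperation); here I would replace each cooperative component with its 3D temperature~1 analogue.

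Concretely, I would set $n = \lceil \log_2 N \rceil$ and $V_0 = 2^n - N$, so that a base-2 counter initialized to $V_0$ and incremented until it reaches $2^n - 1$ takes exactly $N$ steps. I would then apply the 3D temperature~1 optimal encoding construction of Section~\ref{sec:optimal_encoding} to the $n$-bit binary representation of $V_0$. By Lemma~\ref{lem:m_technical_lemma} this yields a tile set of size $O(n/\log n) = O(\log N / \log \log N)$ that uniquely self-assembles, from a single seed tile, into a rectangular structure of height $O(\log N / \log \log N)$ and width $O(\log N)$ whose top row exposes the $n$ bits of $V_0$ in one-bit-per-bump format.

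Next, I would take a standard 2D temperature~2 base-2 zig-zag binary counter, which has $O(1)$ tile types and $O(1)$ distinct north/south glues, and apply the CFS transformation to obtain a 3D temperature~1 tile set of size $O(1)$ that simulates it. A constant number of adapter tile types would convert the bit-bump output of the encoding region into the glue-based input expected by the transformed counter, after which the counter would grow upward for roughly $N$ rows in a column of width $O(\log N)$. Finally, I would introduce a constant-sized set of filler tile types to extend the result horizontally and vertically so that the final footprint satisfies the definition of $S^3_N$.

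Summing contributions, the total tile complexity is $O(\log N / \log \log N) + O(1) + O(1) = O(\log N / \log \log N)$, which establishes the theorem and matches the Kolmogorov lower bound mentioned in the introduction for algorithmically random $N$. The main obstacle will be the plumbing between components: designing the $O(1)$ adapters so that each bit-bump at the top of the encoding region is correctly read by the CFS-transformed counter, ensuring that the encoding, counter, and filler regions do not collide in either the $z=0$ or $z=1$ plane, and verifying that the combined system is directed and terminates in exactly $S^3_N$. These are geometric rather than asymptotic concerns, but each requires care to preserve $\tau=1$ correctness.
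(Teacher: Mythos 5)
Your high-level plan---seed a $\tau=1$ binary counter from the 3D $\tau=1$ optimal encoding region and complete the square with filler tiles---matches the skeleton of the paper's proof, and your tile-count accounting is sound. However, there is a genuine gap in the final step: ``a constant-sized set of filler tile types to extend the result horizontally and vertically'' is not something that works at temperature~1. At $\tau=2$ the standard filler binds cooperatively on two sides and therefore stops exactly along a diagonal defined by the counter and a row it grows past; at $\tau=1$ a filler tile attaches on a \emph{single} side, so it will run off to infinity unless every direction of growth is explicitly blocked by pre-assembled structure. A single counter of width $O(\log N)$ and height $\approx N$ provides a wall on only one side; there is nothing to bound the horizontal spread. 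The paper solves this by using \emph{two} optimal encoding regions (OER1, OER2) and \emph{two} counters (C1 vertical, C2 = C1 rotated $90^\circ$), which together fence in two rectangular holes F1 and F2. Those holes are then tiled by dedicated filler gadgets (Figure~\ref{fig:square_filler_construction}) whose zig-zag growth is bounded on all four sides by previously placed tiles, with two tunable ``knobs'' (number of rooftop rows, width of the terminal gadget) to hit the exact required height and width so the overall footprint is precisely $N\times N$. Your proposal never says what stops the filler, and that is exactly the nontrivial geometric content of the theorem.

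A secondary, non-fatal difference: you plan to obtain the counter by applying the Cook--Fu--Schweller zig-zag transformation to a 2D $\tau=2$ base-2 counter, whereas the paper builds a bespoke 3D $\tau=1$ zig-zag counter (Section~\ref{app:counter}) that reads the OER's bit-bumps geometrically, with no adapter layer and a smaller vertical scaling factor ($2$ instead of $4$). For base $2$ the CFS route also gives $O(1)$ tile types, so your choice is asymptotically fine; but you would still need to design the bump-to-glue adapters you mention, and, more importantly, you would need to coordinate the transformed counter's row-height with the filler knobs so that the heights come out to exactly $N$---bookkeeping the paper carries out explicitly (the formulas for $h(N)$, the initial counter value $2^n-\lfloor h(N)/4\rfloor$, and the modular adjustments for F1, F2). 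In short: the encoding region and the complexity bound are right, the counter choice is a legitimate alternative, but the square-completion step needs the two-counter/two-filler architecture (or an equivalent blocking scheme) to be a correct $\tau=1$ construction.
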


\begin{proof}
Our proof is constructive. Figure~\ref{fig:square_construction}a
shows how we build an $N \times N$ square using two counters C1 and C2
and two filler regions F1 and F2. Counter C1 is a zig-zag counter
whose construction is described in the appendix (see
Section~\ref{app:counter}).  Counter C2 is identical to C1 after a
90-degree clockwise rotation. Each counter is seeded with a value
produced by an optimal encoding region (OER for short). The full
construction for F1 is depicted in
Figure~\ref{fig:square_filler_construction}. F2 is a smaller,
mirror-image of F1 with minor modifications to properly connect all of
the pieces of the square. Both F1 and F2 are essentially squares,
except for two hooks needed to stop the horizontal and vertical
growths of each filler region, namely, one eight-tile hook encroaching
on and another one-tile hook protruding from each filler region (see
Figure~\ref{fig:square_filler_construction}). These hooks require
simple modifications of the OER regions (see
Figure~\ref{fig:square_construction}b) that are all located in
the hard-coded (i.e., first and last) block extracting regions of OER1
and OER2. Note that F1 is also missing a two-tile wide rectangle
region on its left that is used up by the vertical connector that
initiates the assembly of OER2 immediately after the assembly of C1
terminates. Figure~\ref{fig:square_construction}c shows the assembly
sequence for the whole square, while
Figure~\ref{fig:square_construction}b zooms in on the region of the
square where OER1, F1, OER2 and F2 all interact.

\begin{figure}[htp]
\begin{minipage}{0.4\textwidth}
\centering
 \includegraphics[width=0.7\textwidth]{./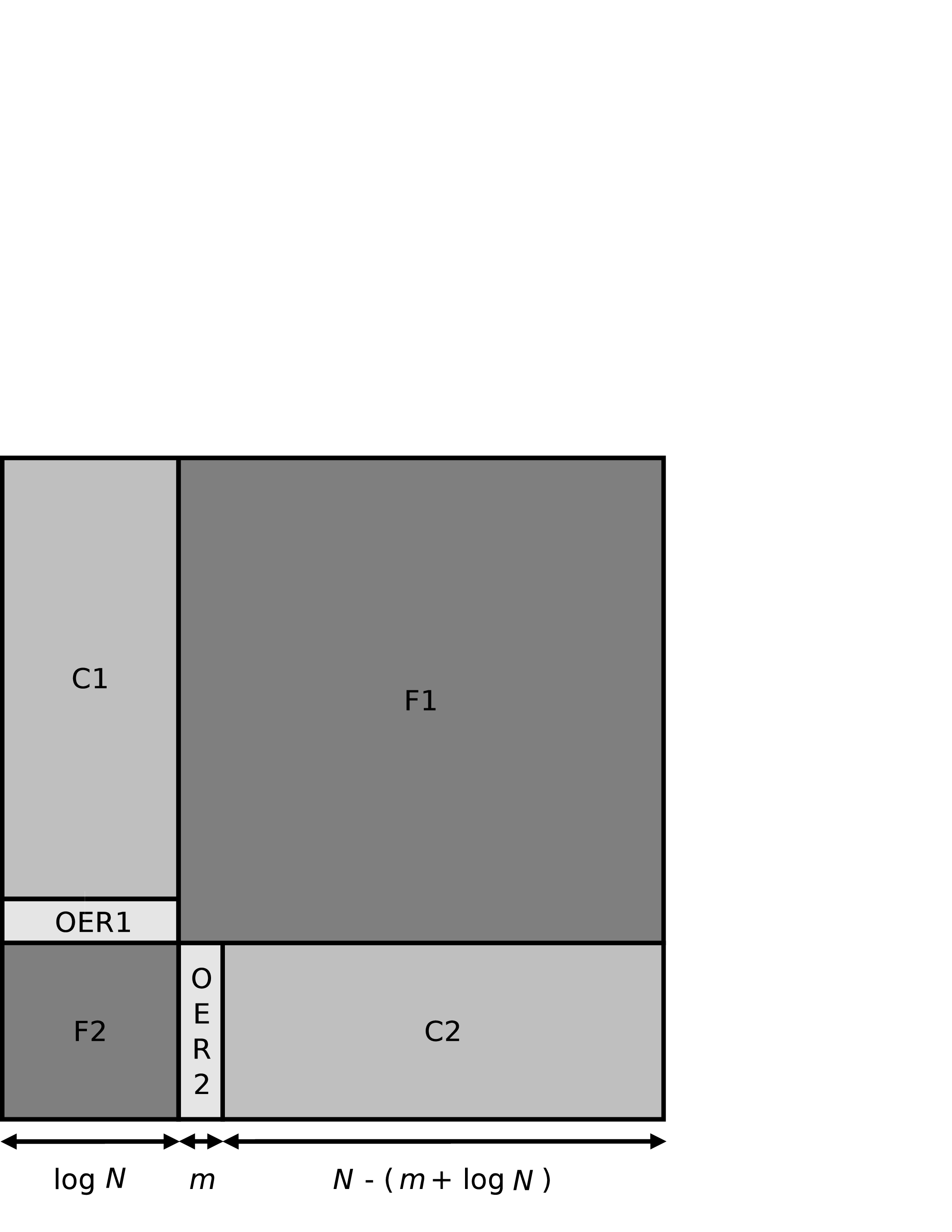}

(a) Overall square construction.
\bigskip

 \includegraphics[width=0.7\textwidth]{./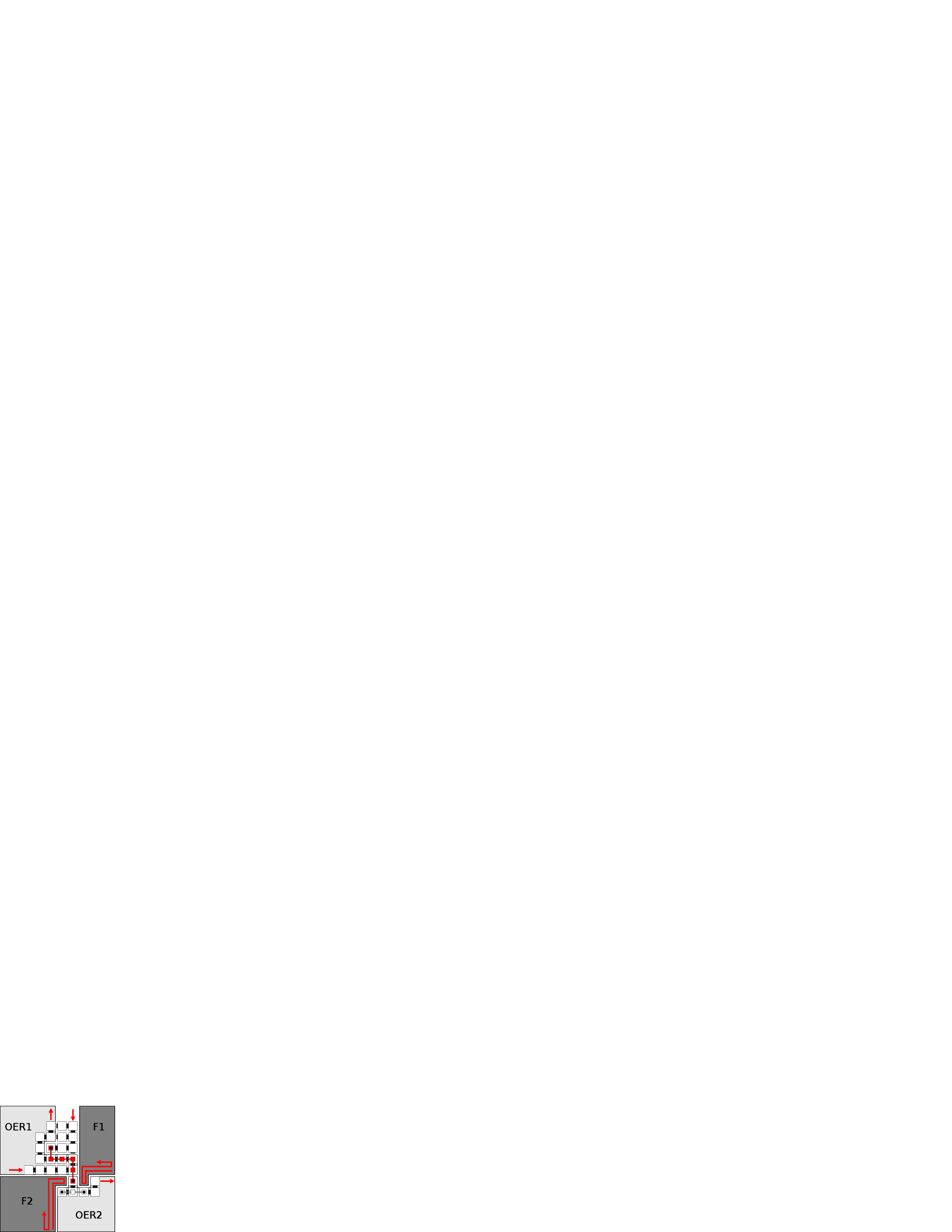}

(b) Detail of the region of the square where OER1, F1, OER2 and F2 meet.
\end{minipage}
\begin{minipage}{0.6\textwidth}
{\centering
 \includegraphics[width=\textwidth]{./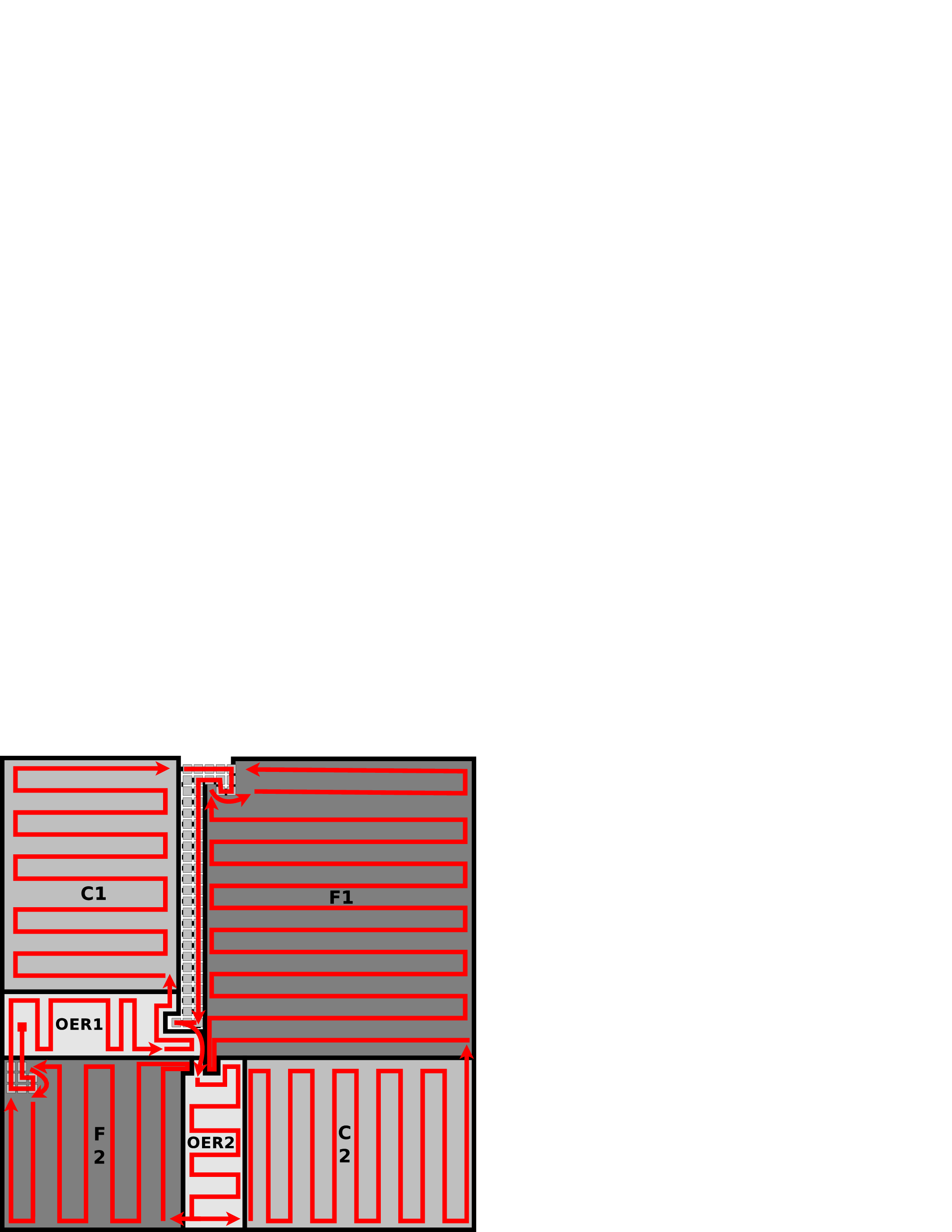}
}

(c) The assembly sequence for the whole square is shown with red
 arrows starting from the seed tile (the red square located in
 OER1). The central region in this sub-figure is shown in more detail in
 Figure~\ref{fig:square_construction}b to the left.
\end{minipage}

\caption{Construction of an $N \times N$ square, where $m$ is
  $O\left(\frac{\log N}{\log \log N}\right)$.  The counters C1 and C2 (in medium
  gray) are identical up to rotation. So are their seed rows, each of
  which is the output of an optimal extraction region (OER1 and OER2,
  respectively, in light gray). F1 and F2 (in dark gray) are filler regions.}
\label{fig:square_construction}
\vspace{0pt}
\end{figure}

\begin{figure}[htp]
\begin{minipage}{\textwidth}
\begin{minipage}{0.4\textwidth}
\centering
 \includegraphics[width=\textwidth]{./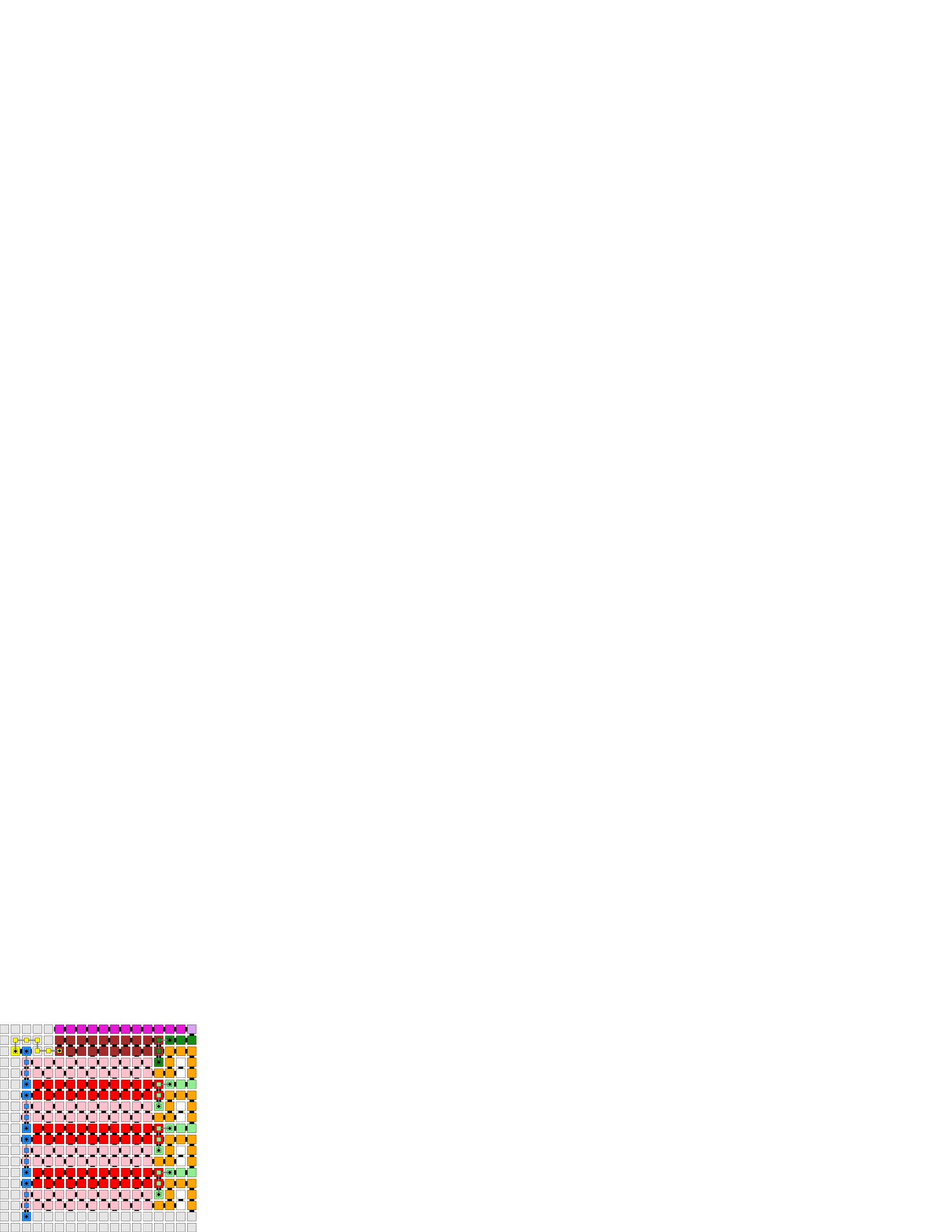}
\end{minipage}\hfill
\begin{minipage}{0.58\textwidth}
 The gray tiles in this figure
do not belong to F1. They are all added to the $N\times N$ square assembly before F1
starts assembling and they determine the height and width of F1, both
of which are adjustable in the following way:\vspace*{-2mm}
\begin{itemize}
\item The height of F1 is always a multiple of four (i.e., the total height
  of each pink plus red gadget), plus the number of purple rows at the
  top, which can be hard-coded to any value in $\{1,2,3,4\}$, together
  with a corresponding increase in the height of the top-left
 (gray)   hook.\vspace*{-2mm}
\item The width of F1 is always a multiple of two (i.e., the width of
  each pink gadget) plus the width of each orange gadget, which is
  either three (by deleting the column occupied by the white tiles) or
  four (as shown).
\end{itemize}\vspace*{-2mm}
Therefore, this construction gives us two knobs, namely the number of
purple rows and the width of the orange gadget, to assemble filler
regions of any height and width, respectively.
\end{minipage}
\end{minipage}
\caption{Detailed construction for the F1 filler region in
  Figure~\ref{fig:square_construction}}\vspace*{-2mm}

\label{fig:square_filler_construction}
\vspace{0pt}
\end{figure}

First, we compute the tile complexity of our construction as the sum
of the tile complexities of all of the components that make up the
$N\times N$ square. Let $n=\lceil \log N \rceil$. If $k$ denotes the
smallest integer satisfying $2^k \geq n/\log n$ and $m$ is defined as
$\lceil n/k\rceil$, then the tile complexity of each OER is $O(n/\log
n)$, as proved in Section~\ref{sec:OER_tile_complexity}. Furthermore,
the tile complexity of each binary counter is $O(1)$ (see
Section~\ref{app:counter}). Finally, the tile complexity of each
filler region is $O(1)$, since each colored gadget in
Figure~\ref{fig:square_filler_construction} has tile complexity
$O(1)$. Therefore, the tile complexity of our square construction is
dominated by that of the OERs and is therefore $O(\frac{n}{\log n})=O\left(\frac{\log
  N}{\log \log N}\right)$.

Second, we need to prove that our tile system is directed and does
produce an $N\times N$ square. The assembly sequence depicted in
Figure~\ref{fig:square_construction}c demonstrates that our tile
system uniquely produces a square. To make sure that this square has
width $N$, we need to pick the initial value $i$ of the counters and
adjust the size of the filler regions as follows. The width of OER1,
C1 and F2 in our construction, and thus also the height of OER2, C2
and F2 is $6n+4$. The height of OER1, and thus also the width of OER2,
is $2m+7$ (see, for example, Figure~\ref{fig:twelve}). Therefore, the
height of C1 (and thus also the width of C2) must be equal to $N -
(2m+7 + 6n+4)= N-2\left\lceil \frac{\lceil \log
  N\rceil}{k}\right\rceil - 6\lceil \log N\rceil - 11$. Let us denote
this value by $h(N)$. Our construction in Section~\ref{app:counter}
gives us two knobs to control the height of any $n$-bit counter: the
initial value $i$ of the counter and the number $r$ of rooftop rows,
where $r \in\{1,2,3,4\}$. Since each value from $i$ to the final value
of the counter $2^n-1$ (inclusive) takes up four rows of tiles, we
must have $\lfloor\frac{h(N)}{4}\rfloor=2^n-i$ and $r=1+h(N) \bmod
4$. Therefore, for both C1 and C2, the initial value of the counter is
$2^n-\lfloor\frac{h(N)}{4}\rfloor$. Finally, the
correct height and width of F1 are obtained by setting the two knobs
described in Figure~\ref{fig:square_filler_construction} to
$1+(2m+7+h(N)-1)\bmod 4$ and $4-(2m+7+h(N)-2)\bmod 2$,
respectively. Similarly, the correct height and width of F2 are
obtained by setting the second knob to $4-(6n+4)\bmod 2$ and the first
knob to $1+(6n+4-1)\bmod 4$.
\end{proof}

\section{Conclusion}
In this paper, we developed a 3D temperature 1 optimal encoding construction, based on the 2D temperature 2 optimal encoding construction of Soloveichik and Winfree \cite{SolWin07}. We then used our construction to answer an open question of Cook, Fu and Schweller \cite{CooFuSch11}, namely, we proved that $K^1_{3DSA}(N) = O\left(\frac{\log N}{\log \log N}\right)$, which is the optimal tile complexity for all algorithmically random values of $N$.

We propose a future research direction as follows. Consider a generalization of the aTAM, called the \emph{two-handed}~\cite{Versus} (a.k.a., hierarchical \cite{CheDot12}, q-tile, multiple tile \cite{AGKS05g}, polyomino \cite{Luhrs08}) abstract Tile Assembly Model (2HAM). A central feature of the 2HAM is that, unlike the aTAM, it allows two ``supertile'' assemblies, each consisting of one or more tiles, to bind. In the 2HAM, an assembly of tiles is producible if it is either a single tile, or if it results from the stable combination of two other producible assemblies. Now define the quantity $K^\tau_{3D2SA}(N)$, as the minimum number of distinct 3D tile types required to uniquely produce it in the 2HAM at temperature $\tau$. An interesting problem would be to determine if $K^1_{3D2SA}(N) = O\left(\frac{\log N}{\log \log N}\right)$.

\newpage

\bibliographystyle{amsplain}
\bibliography{tam}

\newpage

\section{Appendix}
In this section, we provide details on two of our constructions,
namely the optimal encoding and the zig-zag counter, that did not fit
in the main body of the paper.

Throughout this section, we assume that the glues on each tile are implicitly defined to ensure deterministic assembly.

\subsection{Optimal encoding construction}\label{app:optimal_encoding}
The assembly of an extraction region is initiated by the
``block-number'' gadget (see Figure~\ref{fig:fig_14}b). The
block-number gadget is initiated in one of two ways. On the one hand,
the final tile placed in the hard-coded path of tiles that assembles
the first extraction region may initiate the assembly of the
block-number gadget. On the other hand, the block-number gadget may be
initiated by the final tile placed by the final gadget to assemble in
the previous extraction region, known as the ``floor'' gadget, which
we describe later (see Figure~\ref{fig:fig_28}a). The block-number
gadget uses a zig-zag assembly pattern, i.e., it grows in an
alternating left-to-right and right-to-left pattern. This zig-zag
growth pattern is five tiles wide because of the spacing requirements
of our construction. Each zig-zag represents an attempt to place an
``extraction tile'' for a $k$-bit block at a special location, which
is denoted by the dotted tiles above the red tile in the block-number
gadget in Figure~\ref{fig:fig_14}b. When an extraction tile is placed,
the corresponding bits for that specific $k$-bit string are extracted
by a subsequent series of gadgets. The extraction tile for each
$k$-bit block is obstructed (i.e., geometrically hindered from being
placed) by a previous portion of the construction, except for the
extraction tile for the current $k$-bit block (highlighted in red in
Figure~\ref{fig:fig_14}b). Finally, the block-number gadget is
prevented from growing down any further by the tiles placed below it.
These tiles either are part of the perimeter of the first (hard-coded)
extraction region or they were placed by a gadget called the
``repeating hook gadget'' (described below) during the assembly of the
previous extraction region. The placement of these blocking tiles
ensures that only the correct $k$-bit block is extracted within a
given extraction region. Note that we use the same block-number gadget
to initiate the extraction of each of the $m$ distinct $k$-bit blocks,
whence its tile complexity is $O(m)$.

\begin{figure}[htp]%
\doublefigure{65mm}{7mm}{./images/gadgets/two_a}{The shaded gadget in the extraction region represents the block-number gadget, pictured on the right in Figure~\ref{fig:fig_14}b.}{}{./images/gadgets/two_b}{The block-number gadget.}{0.55}
    \caption{\label{fig:fig_14} Overview of the block-number gadget and its location in the construction. }%
\end{figure}

The zig-zag pattern of assembly described above for the block-number gadget is a common property of many of the gadgets in our constructions. The basic idea is that, as the zig-zag path assembles, it will first ``zig'' in one direction, for a small constant number of tiles (depending on the spacing requirements of the gadget) and then it will ``zag'' back in the opposite direction. The final tile in the zag direction tries to grow the zag portion of the path by one more tile (in the zag direction), and also initiates the next zig-zag iteration. In all but one case, the extra zag tile is prevented from being placed (i.e., a tile from a previous portion of the construction is already placed at this location) and the zig-zag pattern simply continues. However, in one case, the zig-zag pattern is blocked and there is an empty location at which the extra zag tile is placed. This non-cooperative assembly algorithm allows various gadgets in our construction to ``know'' when they have reached a certain ``stopping point''. Moreover, most of the time, the gadgets that implement this zig-zag assembly algorithm can be implemented in $O(1)$ tile complexity. Therefore, throughout the following discussion, unless noted otherwise, all gadgets are implemented using $O(1)$ unique tile types.

\begin{figure}[htp]%
\doublefigure{47mm}{7mm}{./images/gadgets/three_a}{The shaded gadget in the extraction region represents the block-hook gadget, pictured on the right in Figure~\ref{fig:fig_15}b.}{}{./images/gadgets/three_b}{The block-hook gadget.}{0.9}
 \caption{\label{fig:fig_15} Overview of the block-hook gadget and its location in the overall construction. }%
\end{figure}

\begin{figure}[htp]%
\doublefigure{56mm}{7mm}{./images/gadgets/four_a}{The shaded gadget in the  extraction region represents the up-extraction gadget, pictured on the right in Figure~\ref{fig:fig_16}b.}{}{./images/gadgets/four_b}{The up-extraction gadget.}{0.25}
    \caption{\label{fig:fig_16} Overview of the up-extraction gadget and its location in the overall construction. }%
\end{figure}

The block-number gadget places the correct extraction tile based on
the block number position, which is encoded in the geometry of a
previous portion of the construction. The placement of an extraction
tile either initiates the assembly of the ``block-hook'' gadget (see
Figure~\ref{fig:fig_15}b) or initiates a different gadget that
assembles the perimeter for the final extraction region via a
hard-coded path of tiles. First, the block-hook gadget assembles an
L-shape path above a portion of the previous extraction region, in the
$z=1$ plane, and then builds, in the $z=0$ plane, a geometric pattern
of tiles in the shape of a one-tile-wide hook that will later stop the
downward growth of the ``hook-seeking gadget'' (see
Figure~\ref{fig:fig_21}b). Then the block-hook gadget initiates the
assembly of the ``up-extraction'' gadget (see
Figure~\ref{fig:fig_16}b). Finally, the block-hook gadget is
responsible for determining the starting point (in the north-south
direction) for a subsequent gadget called the ``hook-initiating''
gadget (see Figure~\ref{fig:fig_22}b). Note that, as the construction
continues, from $k$-bit block to $k$-bit block (except for the first
and last blocks, which are hard-coded), the hook of tiles initially
assembled by the block-hook gadget is translated up by two tiles (one
translation per extraction region). The location of this hook
essentially represents which $k$-bit block to extract next.

As mentioned above, the final tile of the block-hook gadget initiates
the up-extraction gadget. The
up-extraction gadget assembles upward, in a zig-zag pattern (as
described above), parallel to the block-number gadget. The top-left
tile of each zig-zag pattern tries to grow left but is blocked by a
portion of the block-number gadget (this is the zag portion of the
path). When the up-extraction gadget grows to the row immediately
above the top row of the block-number gadget, the red tile shown in
Figure~\ref{fig:fig_16}b is placed. This tile initiates the
``extraction-jump'' gadget (see
Figure~\ref{fig:fig_17}b). Subsequently, the upward growth of the
up-extraction gadget is blocked by tiles from the previous extraction
region. The extraction-jump gadget grows a path of tiles in the $z=1$
plane, above the up-extraction gadget and to the right of a portion of
the previous extraction region. In our construction, we have $O(m)$
distinct block-hook gadgets, up-extraction gadgets and extraction-jump
gadgets (one for each $k$-bit block). The final tile of the
extraction-jump gadget initiates the ``extraction'' gadget (see
Figure~\ref{fig:fig_18}a).

\begin{figure}[htp]%
\doublefigure{47mm}{10mm}{./images/gadgets/five_a}{The shaded gadget in the  extraction region represents the extraction-jump extraction gadget, pictured on the right in Figure~\ref{fig:fig_17}b.}{}{./images/gadgets/five_b}{The extraction-jump gadget.}{0.28}
    \caption{\label{fig:fig_17} Overview of the extraction-jump gadget and its location in the overall construction. }%
\end{figure}

\begin{figure}[htp]%
\doublefigurestacked{23mm}{7mm}{./images/gadgets/six_a}{The shaded gadget in the extraction region represents the extraction gadget, pictured above in Figure~\ref{fig:fig_18}a.}{}{./images/gadgets/six_b}{The extraction gadget (see Figure~\ref{fig:zoom_extraction_gadget} for a detailed view of the large squares in this figure).}{0.8}{3mm}
    \caption{\label{fig:fig_18} Overview of the extraction gadget and its location in the overall construction. }%
\end{figure}

The extraction gadget extracts the current $k$-bit block into a
one-bit-per-bump representation along the top of the current
extraction region. A bump in the $z=0$ plane is representative of a 0
and a bump in the $z=1$ plane is representative of a 1. These bumps
can be seen clearly in Figure~\ref{fig:zoom_extraction_gadget}. Note
that the extraction gadget is the result of concatenating $k$
``bit-extraction'' gadgets together (see
Figure~\ref{fig:zoom_extraction_gadget}). The bit-extraction gadgets
are ``temperature 1'' versions of the ``extract bit'' (temperature 2)
tile types from Figure 5.7a of \cite{SolWin07}. The tile complexity of
the extraction gadget is $O(m)$ (see our discussion at the end of
Section~\ref{sec:OER_tile_complexity}) and we use the same extraction
gadget to extract all $k$-bit blocks.

After the extraction gadget finishes extracting the $k$ bits of the
current block, it initiates the assembly of the ``ceiling'' gadget
(see Figure~\ref{fig:fig_20}a). The ceiling gadget assembles a path of
tiles from right to left, placing its final tile under the starting
point of the extraction gadget (see the red tile in
Figure~\ref{fig:fig_20}a). Note that, as it assembles toward its
ending point, the ceiling gadget places a tile at a carefully chosen
location in the $z=1$ plane, the purpose of which is to block a
portion of a subsequent gadget, known as the ``repeating-up'' gadget
(see Figure~\ref{fig:fig_23}b), which we describe later. Due to
spacing constraints, this special $z=1$ tile is placed in the column
of tiles that is one tile to the left of the penultimate bit-bump of
the current extraction region. The placement of this special $z=1$
tile signals a subsequent gadget to assemble the remaining perimeter
of the current extraction region and then initiate the assembly of the
next extraction region (the gadgets that carry out these tasks will be
discussed below). We use a single ceiling gadget in our construction
(i.e., the same one is used in all of the $m - 2$
generally-constructed extraction regions) and its tile complexity is
proportional to the width of an extraction region, which is $O(k)$ and
thus $O(m)$. The final tile in the ceiling gadget initiates the
assembly of the ``hook-seeking'' gadget (see
Figure~\ref{fig:fig_21}b).

\begin{figure}[htp]%

\begin{minipage}[t]{\textwidth}
\fbox{
\begin{minipage}[t][15mm][b]{0.45\textwidth}
\begin{center}
    \subfloat[][Bump representing a 0 in the extraction gadget of Figure~\ref{fig:fig_18}a.]{%
        \label{fig:zoom_zero_bump}%
        \includegraphics[width=0.8\textwidth]{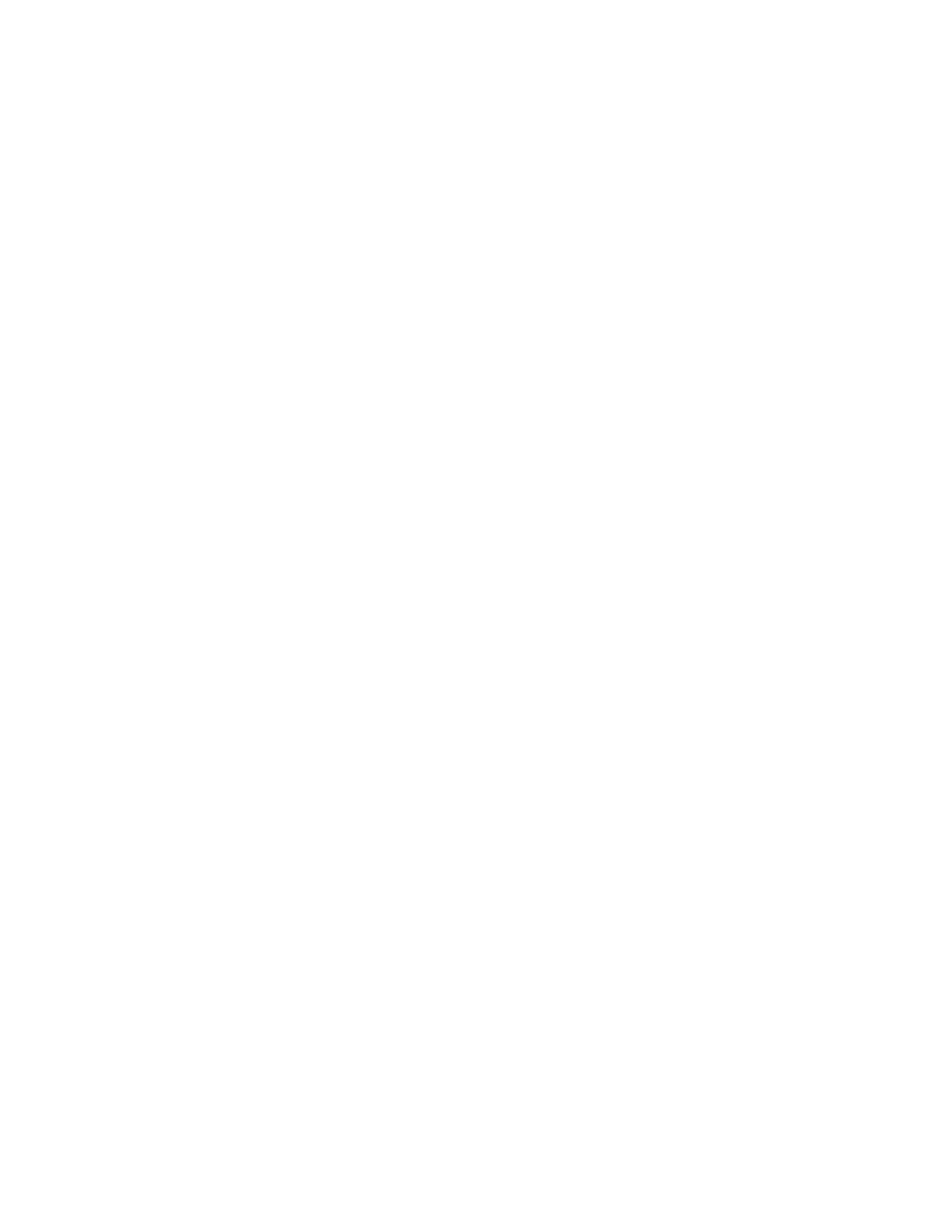}}%
\end{center}
\end{minipage}
}\hfill
\fbox{
\begin{minipage}[t][15mm][b]{0.45\textwidth}
\begin{center}
    \subfloat[][Bump representing a 1 in the extraction gadget of Figure~\ref{fig:fig_18}a.]{%
        \label{fig:zoom_one_bump}%
        \includegraphics[width=0.8\textwidth]{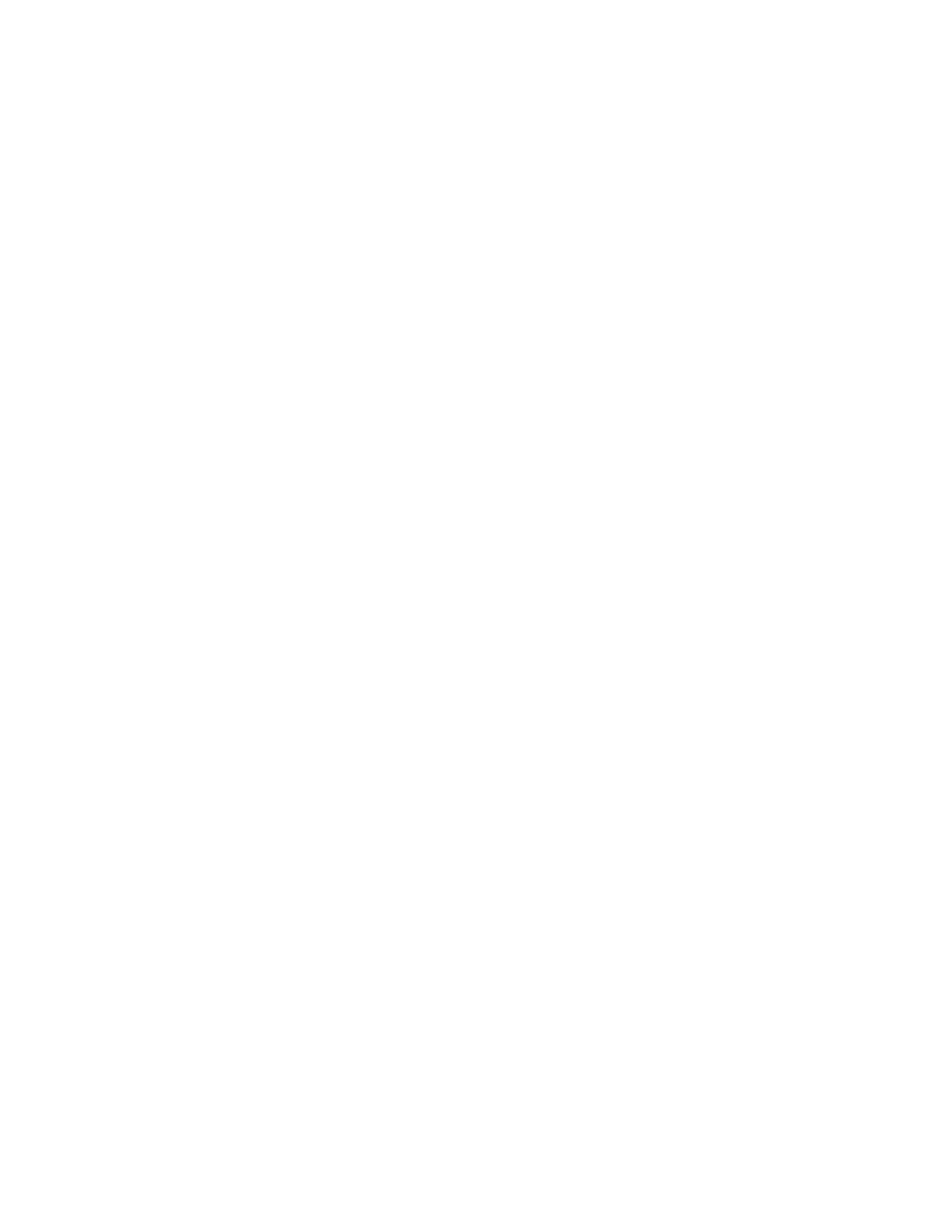}}
\end{center}
\end{minipage}
}
\end{minipage}
    \caption{\label{fig:zoom_extraction_gadget} Magnified view of the bit-extraction gadget to clarify the differences between bumps representing 0s and 1s. }%
\end{figure}

\begin{figure}[htp]%
\doublefigurestacked{22mm}{3mm}{./images/gadgets/six_e}{The shaded gadget in the extraction region represents the ceiling gadget, pictured above in Figure~\ref{fig:fig_20}a.}{}{./images/gadgets/six_f}{The ceiling gadget.}{0.8}{3mm}
    \caption{\label{fig:fig_20} Overview of the ceiling gadget and its location in the overall construction. }%
\end{figure}

\begin{figure}[htp]%
\doublefigure{67mm}{7mm}{./images/gadgets/seven_a}{The shaded gadget in the  extraction region represents the hook-seeking gadget, pictured on the right in Figure~\ref{fig:fig_21}b.}{}{./images/gadgets/seven_b}{The hook-seeking gadget.}{0.25}
    \caption{\label{fig:fig_21} Overview of the hook-seeking gadget and its location in the construction. }%
\end{figure}

The hook-seeking gadget starts growing from the final tile that was placed by the ceiling gadget and grows down in a zig-zag pattern, parallel to the previous up-extraction gadget. The hook-seeking gadget grows down until it finds the location of the block hook (see Figure~\ref{fig:fig_15}b). The assembly of the hook-seeking gadget is very similar to the block-number gadget, i.e., the placement of certain tiles is blocked until the block hook is reached, at which point the downward growth of the hook-seeking gadget is blocked and a special tile is allowed to be placed to the left of the hook-seeking gadget (in the space directly above the block-hook gadget). Once placed, this special tile initiates the assembly of the ``hook-initiating'' gadget (see Figure~\ref{fig:fig_22}b). The same hook-seeking gadget is used in all generally-constructed extraction regions.

\begin{figure}[htp]%
\doublefigure{48mm}{7mm}{./images/gadgets/eight_a}{The shaded gadget in the  extraction region represents the hook-initiating gadget, pictured on the right in Figure~\ref{fig:fig_22}b.}{}{./images/gadgets/eight_b}{The hook-initiating gadget.}{0.8}
    \caption{\label{fig:fig_22} Overview of the hook-initiating gadget and its location in the construction. }%
\end{figure}

First, the hook-initiating gadget assembles a path of tiles in the $z=1$ plane directly above a portion of the previous hook-seeking gadget. Then, it assembles a group of tiles in the shape of a two-tile-wide hook in the $z=0$ plane. This hook of tiles will block the downward assembly of the subsequent ``repeating-down'' gadget (see Figure~\ref{fig:fig_25}b). When the first repeating-down gadget in the current extraction region is blocked by the hook-initiating gadget, the former will ``know'' the location of the latter and thus will initiate the assembly of a translated version of the hook (translated up by two tiles). The same hook-initiating gadget is used in all generally-constructed extraction regions. The final tile of the hook-initiating gadget initiates the assembly of the ``repeating-up'' gadget (Figure~\ref{fig:fig_23}b).

\begin{figure}[htp]%
\doublefigure{53mm}{7mm}{./images/gadgets/nine_a}{The shaded gadget in the  extraction region represents the repeating-up gadget, pictured on the right in Figure~\ref{fig:fig_23}b.}{}{./images/gadgets/nine_b}{The repeating-up gadget.}{0.35}
    \caption{\label{fig:fig_23} Overview of the repeating-up gadget and its location in the construction. }%
\end{figure}

The repeating-up gadget is three tiles wide and uses a zig-zag pattern
of assembly to search for the top of the previous repeating-down
gadget (or the top of the hook-seeking gadget in the case of the first
occurrence of the repeating-up gadget). This top is found when the
top-left tile of the zig-zag pattern can place a special tile in one
of the locations denoted with a dotted outline (under the red tile) in
Figure~\ref{fig:fig_23}b. Further upward zig-zag growth of the
repeating-up gadget is blocked by the ceiling gadget. The tile
two tiles south of the big red tile in Figure~\ref{fig:fig_23}b grows north
one location in the $z=0$ plane and is forced to make a decision:
either (1) assemble into the $z=1$ plane and initiate the
``initiate-repeating-down'' gadget (see Figure~\ref{fig:fig_24}b), or
(2) if such growth is blocked in the $z=1$ plane, grow north one more
location and initiate the assembly of the
``initiate-next-extraction-region'' gadget (see
Figure~\ref{fig:fig_26}b). The same repeating-up gadget is used in
all generally-constructed extraction regions.

\begin{figure}[htp]%
\doublefigure{48mm}{12mm}{./images/gadgets/ten_a}{The shaded gadget in the  extraction region represents the initiate-repeating-down gadget, pictured on the right in Figure~\ref{fig:fig_24}b.}{}{./images/gadgets/ten_b}{The initiate-repeating-down gadget.}{0.45}
    \caption{\label{fig:fig_24} Overview of the initiate-repeating-down gadget and its location in the construction. }%
\end{figure}

The assembly of the initiate-repeating-down gadget is initiated by the repeating-up gadget. It is basically a line of tiles that assembles in the $z=1$ plane and ``jumps'' over the top row of the previous repeating-up gadget. An important property of the initiate-repeating-down gadget is that it can only form when a certain tile in the previous repeating-up gadget is not blocked (by a tile in the ceiling gadget) in the $z=1$ plane. The final tile of the initiate-repeating-down gadget initiates the assembly of another repeating-down gadget. The same initiate-repeating-down gadget is used in all generally-constructed extraction regions.

\begin{figure}[htp]%
\doublefigure{53mm}{7mm}{./images/gadgets/twelve_a}{The shaded gadget in the extraction region represents the repeating-down gadget, pictured on the right in Figure~\ref{fig:fig_25}b.}{}{./images/gadgets/twelve_b}{The repeating-down gadget.}{0.35}
    \caption{\label{fig:fig_25} Overview of the repeating-down gadget and its location in the construction. }%
\end{figure}

The assembly of a repeating-down gadget is initiated by the final tile
of the initiate-repeating-down gadget. The purpose of the
repeating-down gadget is to find either the repeating-hook gadget or
the hook-initiating gadget (note that the latter scenario only occurs
with the first repeating-down gadget within each extraction
region). When the hook is found, the repeating-down gadget places a
tile at a special location, namely one of the locations denoted by a
dotted outline above the red tile in Figure~\ref{fig:fig_25}b. The red
tile placed at this special location initiates the assembly of the
``repeating-hook'' gadget (see Figure~\ref{fig:fig_27}b). The same
repeating-down gadget is used in all generally-constructed
extraction regions.

\begin{figure}[htp]%
\doublefigure{48mm}{7mm}{./images/gadgets/twelve_c}{The shaded gadget in the extraction region represents the repeating-hook gadget, pictured on the right in Figure~\ref{fig:fig_27}b.}{}{./images/gadgets/twelve_d}{The repeating-hook gadget.}{0.85}
    \caption{\label{fig:fig_27} Overview of the repeating-hook gadget and its location in the construction. }%
\end{figure}

The repeating-hook gadget assembles a path of tiles in the $z=1$ plane in order to avoid a portion of the previous repeating-down gadget. This $z=1$ path assembles right and then down and ultimately assembles a group of tiles in the shape of a hook in the $z=0$ plane, similar to the shape of the hook-initiating gadget. The repeating-hook gadget is initiated by each repeating-down gadget. The main purpose of the repeating-hook gadget is to geometrically propagate the block number position, via the hook of tiles, through the current extraction region. The hook shape of the repeating-hook gadget will also serve to block the downward assembly of the next repeating-down gadget. Note that the final, rightmost hook within an extraction region will serve to block the downward assembly of the block-number gadget of the next extraction region. The same repeating-hook gadget is used in all generally-constructed extraction regions.

\begin{figure}[htp]%
\doublefigure{48mm}{12mm}{./images/gadgets/eleven_a}{The shaded gadget in the extraction region represents the initiate-next-extraction-region gadget, pictured on the right in Figure~\ref{fig:fig_26}b.}{}{./images/gadgets/eleven_b}{The initiate-next-extraction-region gadget.}{0.7}
    \caption{\label{fig:fig_26} Overview of the initiate-next-extraction-region gadget and its location in the construction. }%
\end{figure}

In the case where the repeating-up gadget is blocked in the $z=1$ plane (by a particular tile in the ceiling gadget, as described above), it cannot initiate the assembly of another repeating-down gadget. However, in this case, because of the geometry of the ceiling gadget, the repeating-up gadget may grow a path of tiles in the $z=0$ plane up and underneath part of the ceiling gadget, much like how a highway runs directly underneath an overpass. This is essentially the ``signal'' from the ceiling gadget to the repeating-up gadget that the current extraction region is almost completed. Note that this signal is hard-coded into the geometry of the ceiling gadget for every extraction region (an obvious consequence of the fact that we use a single ceiling gadget in all generally-constructed extraction regions). The red tile in Figure~\ref{fig:fig_23}b, in this case, is blocked from growing into the $z=1$ plane, but is unblocked on its north side in the $z=0$ plane and therefore initiates the assembly of the ``initiate-next-extraction-region'' gadget (see Figure~\ref{fig:fig_26}b).

\begin{figure}[htp]%
\doublefigurestacked{82mm}{7mm}{./images/gadgets/thirteen_a}{The shaded gadget in the extraction region represents the floor gadget, pictured above in Figure~\ref{fig:fig_28}a.}{}{./images/gadgets/thirteen_b}{The floor gadget.}{0.9}{3mm}
    \caption{\label{fig:fig_28} Overview of the floor gadget and its location in the overall construction. }%
\end{figure}

The initiate-next-extraction-region gadget is a short horizontal path of tiles in the $z=0$ plane, the last of which initiates the assembly of the ``floor'' gadget (see Figure~\ref{fig:fig_28}a). The same initiate-next-extraction-region gadget is used in all generally-constructed extraction regions.

The assembly of the floor gadget is initiated by the last tile placed by the initiate-next-extraction-region gadget. The floor gadget serves two purposes: it first places tiles along the bottom row of the current extraction region and then it initiates the assembly of the next extraction region by initiating the block-number gadget for the next extraction region. See Figure~\ref{fig:fig_28}a for an example of how the floor gadget assembles. Since this gadget must assemble a path of tiles of length $O(\textmd{perimeter of an extraction region})$, its tile complexity is $O(m)$. We use the same floor gadget in all generally-constructed extraction regions.

\subsection{Construction of a zig-zag counter}\label{app:counter}

In this section, we describe the construction for the binary, $n$-bit,
zig-zag counter that we use in our square construction.\footnote{Our
  construction is different from the one described
  in~\cite{CooFuSch11}. That paper describes a general procedure for
  converting any 2D temperature 2 zig-zag tile system into a 3D
  temperature 1 tile system. For example, one difference is in the
  scaling factor in the vertical dimension, that is, how many rows of
  tiles are needed in the temperature 1 tile system to represent a
  single increment row or copy row in the temperature 2 tile
  system. In our construction, this scaling factor is equal to 2,
  while it is equal to 4 in the conversion procedure described
  in~\cite{CooFuSch11}. Of course, our construction only produces
  binary counters and does not apply to any other zig-zag tile
  system.} An example assembly with $n=3$ is depicted in
Figure~\ref{fig:counter}. The initial value of the counter is encoded
as a geometric pattern of bit-bumps. This seed row, which is part of
another construction (in our case, an optimal encoding region or OER),
appears as bit-bumps sticking out on the north side of the row of gray
tiles at the bottom of Figure~\ref{fig:counter}. In our example, the
initial value of the counter is 000. The assembly of the counter
starts at the single north glue drawn in orange and sticking out of the
OER in the bottom-right corner of the figure. The assembly proceeds by
alternating increment rows, assembling from right to left (in blue in
the figure) and copy rows, assembling from left to right (in green in
the figure). The counter stops when the maximum $n$-bit value is
reached, at which point it assembles one additional increment row (in
blue) and one flat roof (i.e., with no bumps on the north side), shown
as white tiles in the figure. The tile complexity of this
construction, which is described in detail in the rest of this
section, is $O(1)$.

\begin{figure}[ht]
\centering
 \includegraphics[width=0.8\textwidth]{./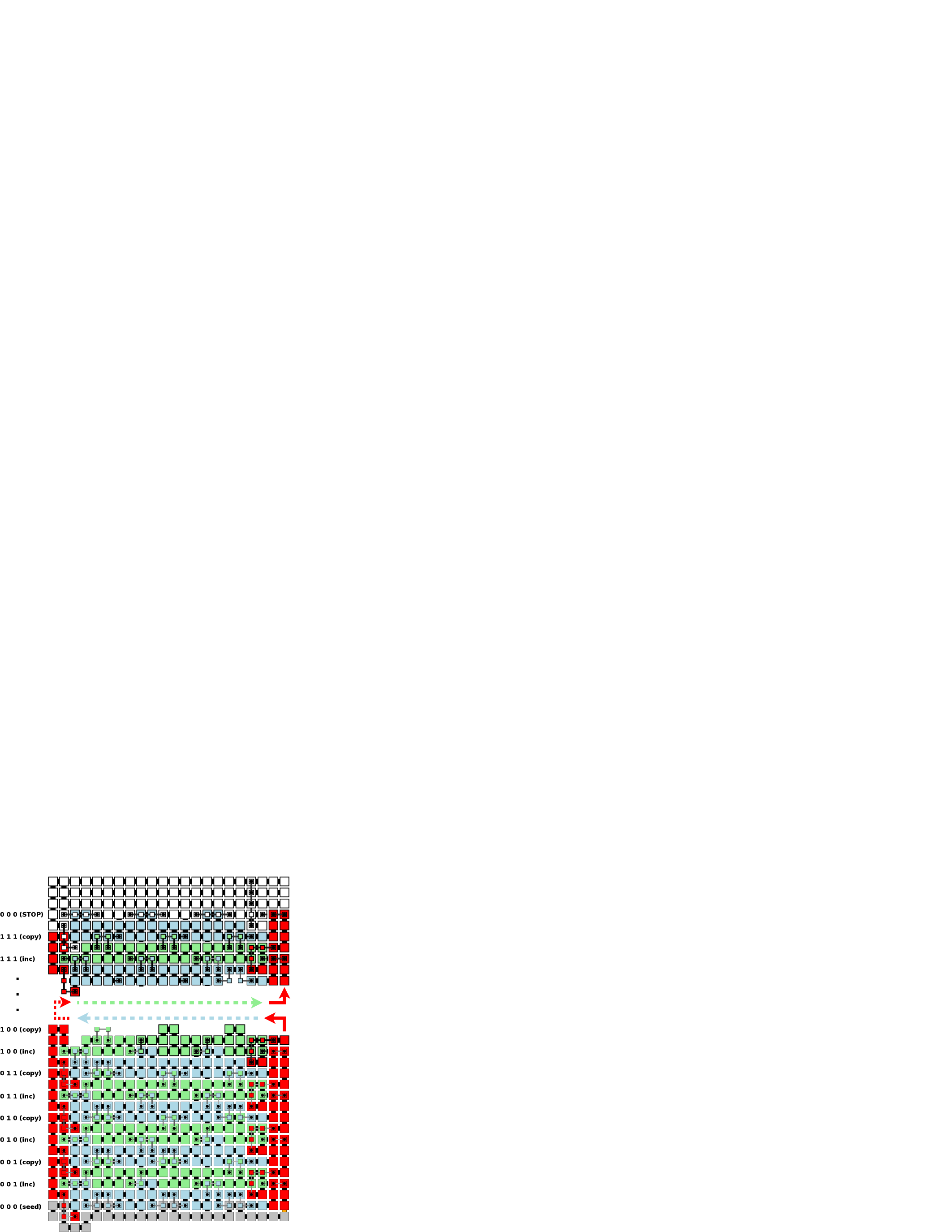}
\caption{Counter construction. The gray tiles at the bottom of the
  counter are part of the optimal extraction region that produces the
  seed value. All other tiles are part of the gadgets shown in
  Figures~\ref{fig:counter_gadgets1}
  and~\ref{fig:counter_gadgets2}. The assembly of the counter starts
  at the orange glue in the bottom-right corner of the figure.
}
\label{fig:counter}
\vspace{0pt}
\end{figure}

The counter construction begins with a right wall, that is, the gadget
depicted in Figure~\ref{fig:counter_gadgets1}b, that will serve to
block the growth of the next copy row. But first, the right-wall
gadget initiates an increment row.

The three gadgets needed for the increment rows are shown at the
bottom of Figure~\ref{fig:counter_gadgets1}. The main gadget in this
group, depicted in Figure~\ref{fig:counter_gadgets1}(e), increments
each bit (from 0 to 1 or from 1 to 0). Note that the bit advertised by
the previous row is not only incremented but also shifted by two tiles
to the left.  The second gadget in this group is the copy gadget
depicted in Figure~\ref{fig:counter_gadgets1}(d). This gadget is used
to leave the bits unchanged in the increment row once the rightmost
0-bit has been incremented and no carry needs to be propagated. Again,
the copied bits are shifted by two tiles to the left. This shift also
happens with the third gadget in this group, depicted in
Figure~\ref{fig:counter_gadgets1}(f), which is specific to the least
significant bit of the counter: the notch (i.e., the two missing tiles
in the top-right corner of the gadget) will serve as the starting
point for a later gadget.

The left-wall gadget, depicted in
Figure~\ref{fig:counter_gadgets1}a, is initiated only when the
bottom-left tile in any one of the increment gadgets is allowed to
grow south. This wall is used to mark the end of the current increment row and
initiate the next copy row.

The two gadgets needed for the copy rows are shown at the top of
Figure~\ref{fig:counter_gadgets2}. Both gadgets in this group copy
each bit (unchanged) and shift the corresponding bit-bump two tiles to
the right to compensate for the leftward shift performed in the
previous increment row. Each copy row starts with the gadget in
Figure~\ref{fig:counter_gadgets2}a that copies the most significant
bit of the counter. The other gadget in this group, depicted in
Figure~\ref{fig:counter_gadgets2}b, copies all other bits. This
gadget also detects the end of the copy row when its bottom-right tile
is allowed to grow south (and is simultaneously blocked in its
rightward movement in the $z=1$ plane by the right wall that was
assembled at the beginning of the previous increment row). At that
point, the right-wall foundation gadget (see
Figure~\ref{fig:counter_gadgets1}(c)) takes over and initiates another
iteration of the increment row/copy row construction by building a
right wall.

The next group of gadgets in our construction are used to detect that
the maximum value has been reached, that is, when all $n$ bits are
equal to 1. These gadgets are modified copies of all of the gadgets
that we have described so far. The only difference between each copy
and the original gadget is that the new gadget remembers that the most
significant bit of the counter has already been incremented to the
value 1. These gadgets are not depicted individually since they are
identical to the red, blue and green gadgets except for, say, a prime
being added to their glue labels. These gadgets, shown with bold
outlines in Figure~\ref{fig:counter}, are used exclusively in the
``top-half'' of the counter construction, or as soon as the ``msb
right copy'' gadget has incremented the most significant bit from 0 to
1 (see the row labelled ``1 0 0 (inc)'' in Figure~\ref{fig:counter}).

To complete the construction of the counter as a perfect rectangle, we
need to build a flat roof on top. This roof construction starts at the
south glue of the bottom-left tile in the ``copied and modified''
(bold) increment gadget. This glue initiates the assembly of the ``msb
eave'' gadget (see Figure~\ref{fig:counter_gadgets2}(c)), which makes
up the topmost left wall and allows the roof to assemble. First, the
``middle bottom roof'' gadget (see
Figure~\ref{fig:counter_gadgets2}(d)) is repeated from left to right
to form a single row of (white) tiles with no bit-bumps on its north
side. Second, the main roof gadget (see
Figure~\ref{fig:counter_gadgets2}(e)) is hard-coded to assemble
between 1 and 4 rows of tiles (again, with a flat top). The height of
this last gadget depends on the target height $h$ of the counter in
the following way: the bottommost row in the main roof gadget rounds
up the total height of the counter to a multiple of 4 (note that the
``middle bottom roof'' row and the bottom row of the main roof gadget
together play the role of the last green, copy row). Then the number
of \emph{additional} rows in the main roof gadget must be equal to $h$
modulo 4. Therefore, the number of full rows of white tiles in the
main roof gadget must be equal to $1 + h \bmod{4}$. Of course, the
height of the ``msb eave'' gadget must also be adjusted to match the
height of the main roof gadget.

In conclusion, this construction uses an $n$-bit counter to build a
rectangle with width $6n+4$ and height $4(2^n-i)+r$, where $i$ is
the initial value of the counter and $r$ is equal to the height of the
counter modulo 4.

\begin{figure}[ht]
\centering


\begin{minipage}[t]{\textwidth}
\centering
\fbox{
\begin{minipage}[t][45mm][b]{0.31\textwidth}
\centering
 \includegraphics[width=0.35\textwidth]{./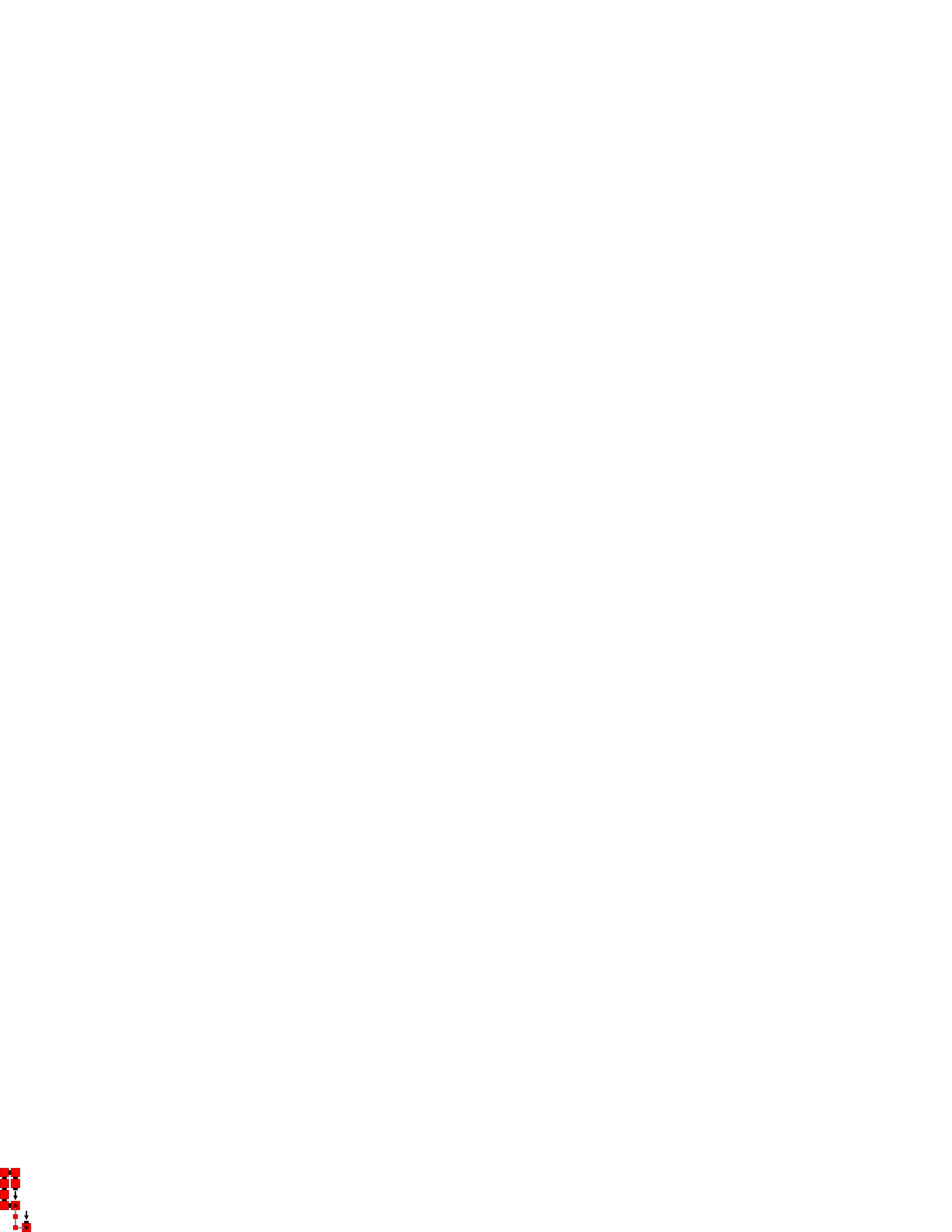}

(a) Left-wall gadget
\end{minipage}}\hfill
\fbox{\begin{minipage}[t][45mm][b]{0.31\textwidth}
\centering
 \includegraphics[width=0.35\textwidth]{./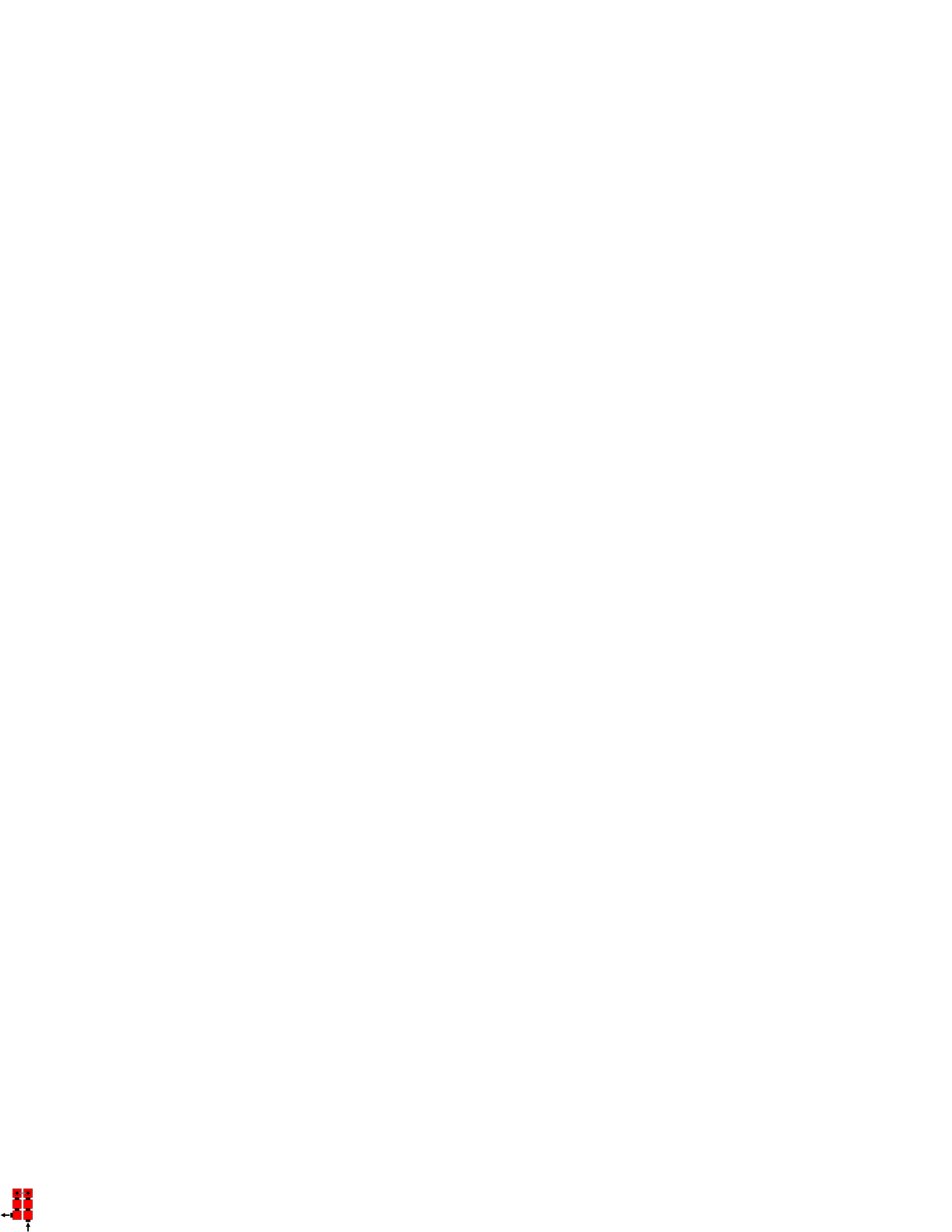}

(b) Right-wall gadget
\end{minipage}}\hfill
\fbox{\begin{minipage}[t][45mm][b]{0.31\textwidth}
\centering
 \includegraphics[width=0.45\textwidth]{./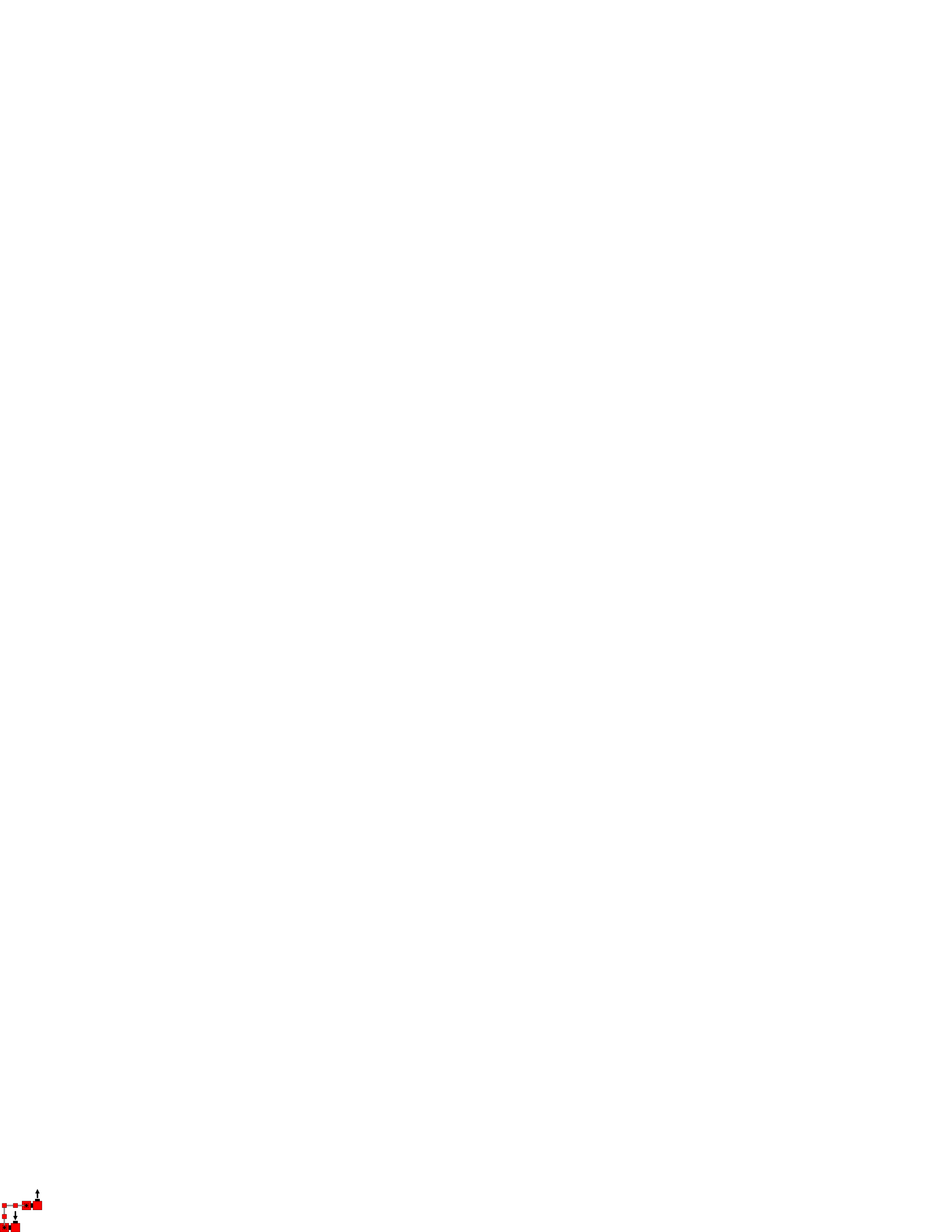}

(c) Right-wall foundation gadget
\end{minipage}}
\end{minipage}

\medskip


\begin{minipage}[t]{\textwidth}
\centering
\fbox{
\begin{minipage}[t][65mm][b]{51.1mm}
\centering
 \includegraphics[width=0.9\textwidth]{./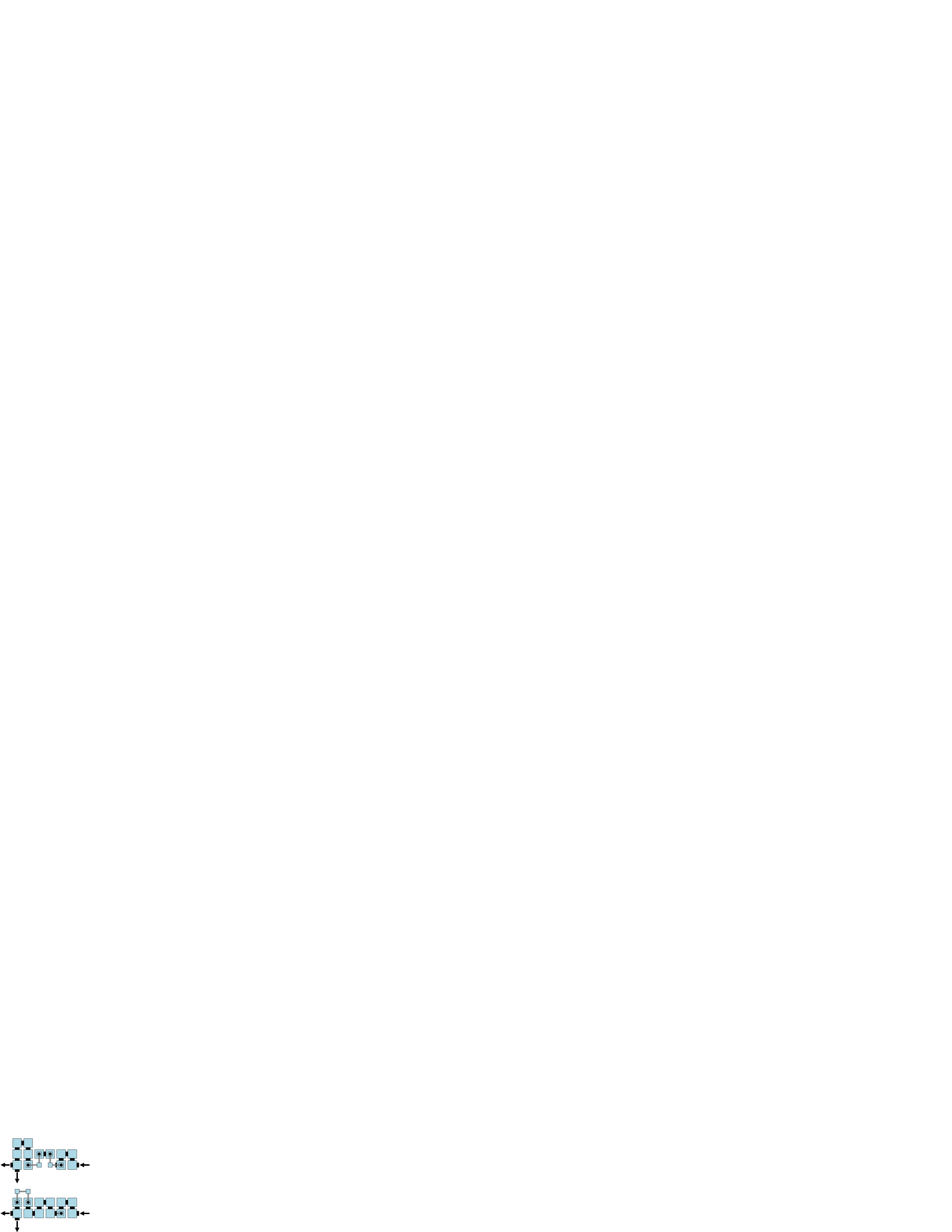}

(d) Left copy gadget\\ (top: configuration for bit 0; \\bottom: configuration for bit 1)
\end{minipage}}\hfill
\fbox{
\begin{minipage}[t][65mm][b]{50mm}
\centering
 \includegraphics[width=0.9\textwidth]{./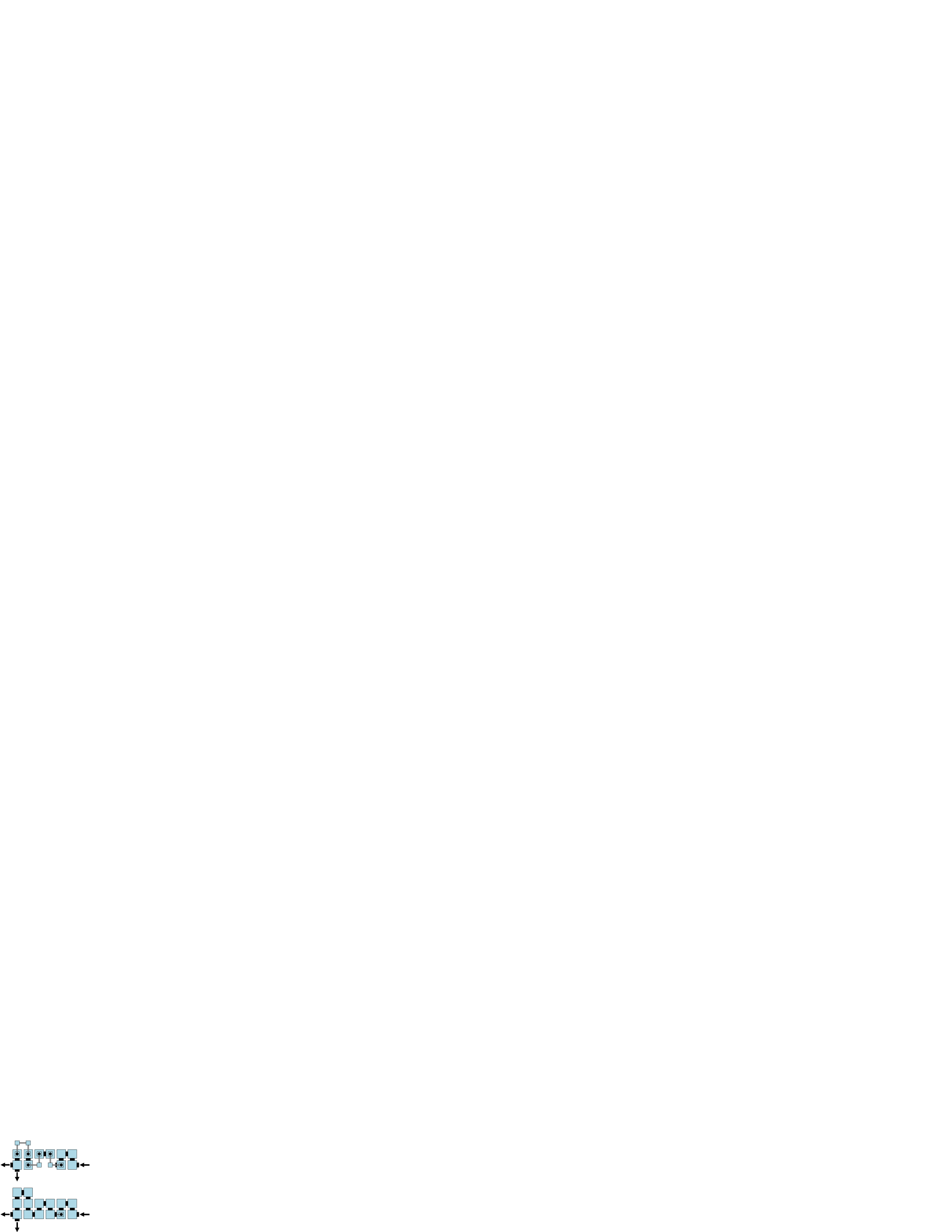}

(e) Increment gadget\\ (top: incrementing the 0 bit;\\ bottom: incrementing the 1 bit)
\end{minipage}}\hfill
\fbox{
\begin{minipage}[t][65mm][b]{50mm}
\centering
 \includegraphics[width=0.9\textwidth]{./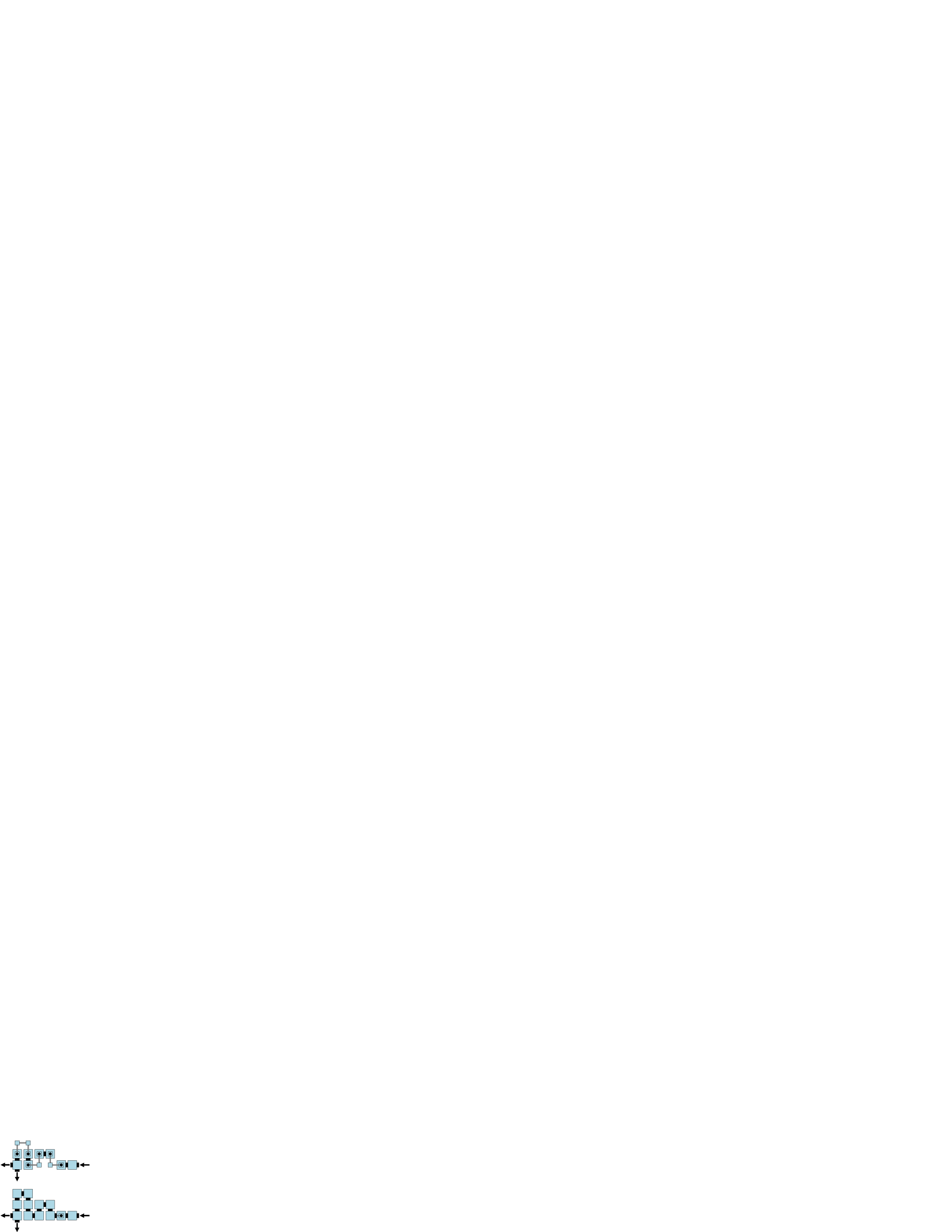}

(f) Lsb increment gadget\\ (top: incrementing the 0 bit;\\ bottom: incrementing the 1 bit)
\end{minipage}}
\end{minipage}

\caption{First set of gadgets used in the construction of the zig-zag counter. In each gadget, the black arrows indicate the entry and exit points of the gadget.}
\label{fig:counter_gadgets1}
\vspace{0pt}
\end{figure}

\begin{figure}[ht]
\centering


\begin{minipage}[t]{\textwidth}
\centering

\fbox{
\begin{minipage}[t][65mm][b]{0.45\textwidth}
\centering
 \includegraphics[width=0.65\textwidth]{./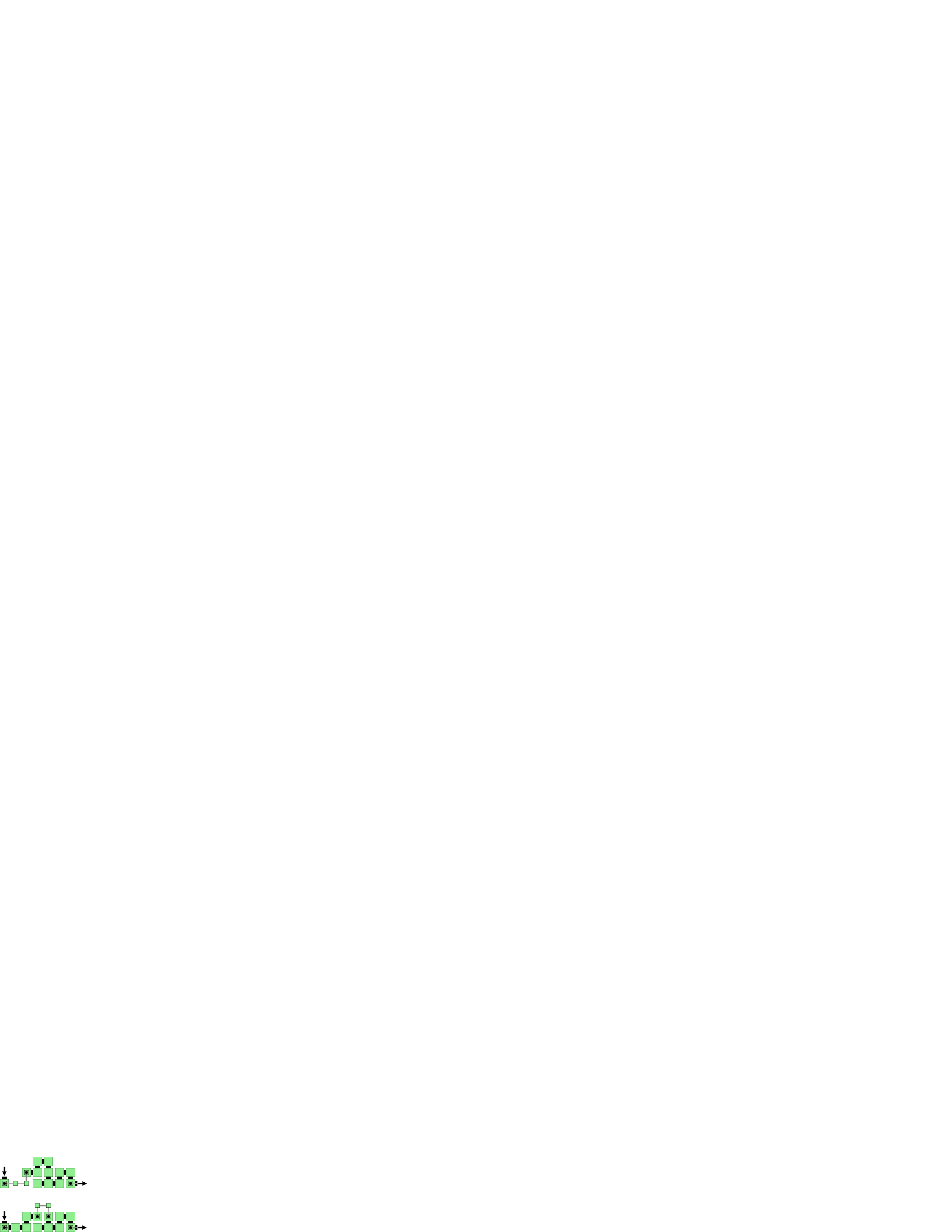}

(a) Msb right copy gadget\\ (top: configuration for bit 0; \\bottom: configuration for bit 1)
\end{minipage}}\hfill
\fbox{
\begin{minipage}[t][65mm][b]{0.45\textwidth}
\centering
 \includegraphics[width=0.7\textwidth]{./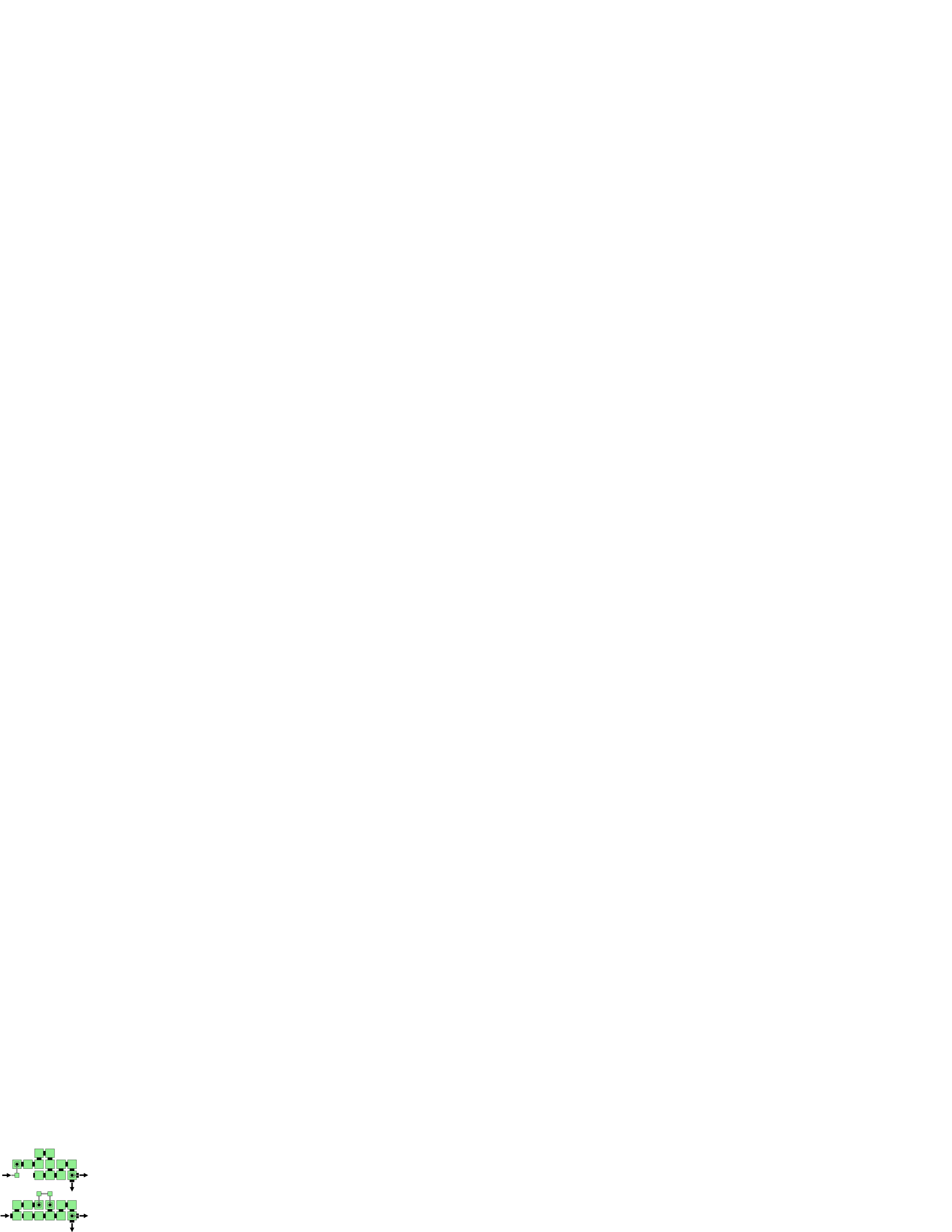}

(b) Right copy gadget\\ (top: configuration for bit 0; \\bottom: configuration for bit 1)
\end{minipage}}
\end{minipage}

\bigskip


\begin{minipage}[t]{\textwidth}
\centering
\fbox{
\begin{minipage}[t][60mm][b]{0.45\textwidth}
\centering
 \includegraphics[width=0.25\textwidth]{./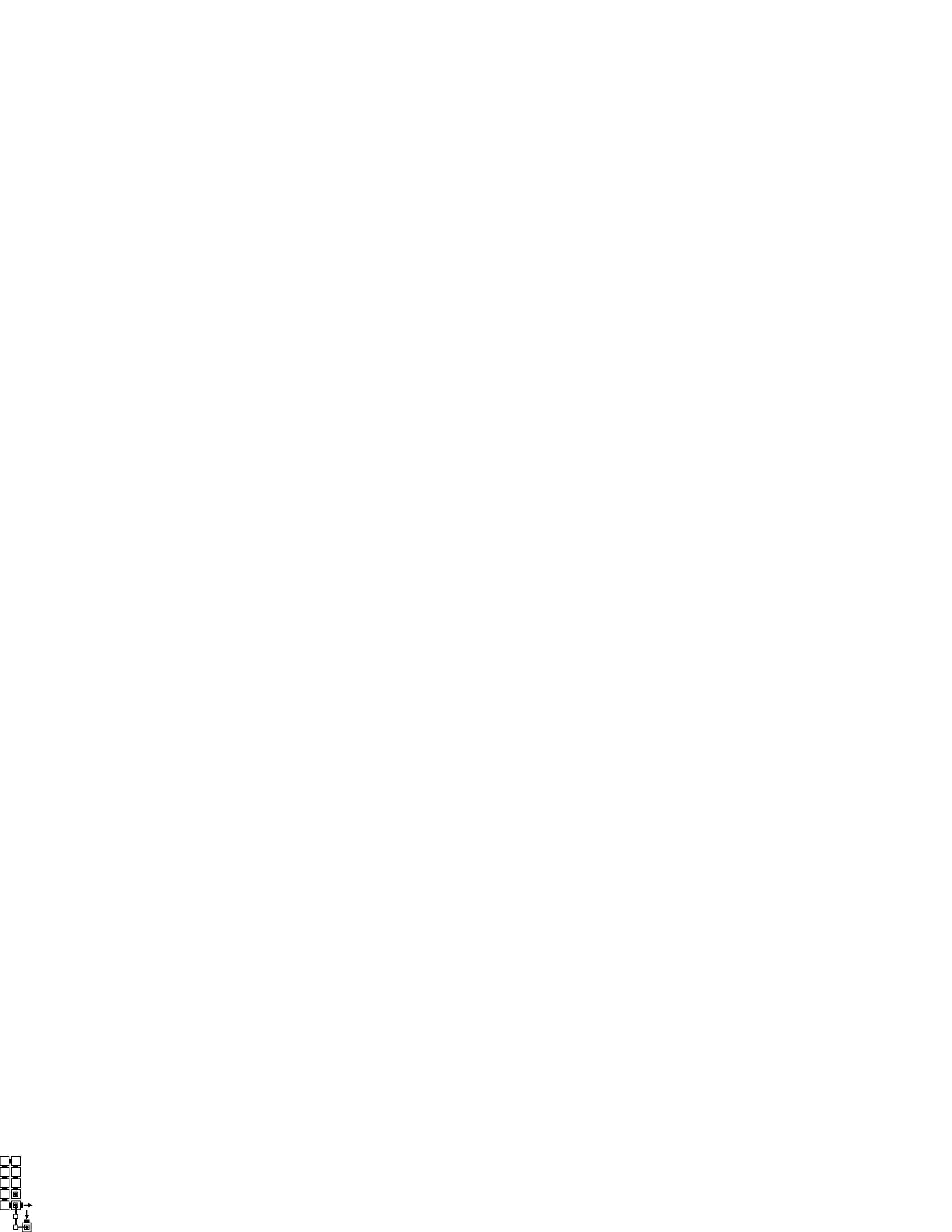}

(c) Msb eave gadget: the two-tile-wide leftmost column can vary in height
 from 2 to 5 tiles (5 tiles are shown above)
\end{minipage}}\hfill
\fbox{
\begin{minipage}[t][60mm][b]{0.45\textwidth}
\centering
 \includegraphics[width=0.7\textwidth]{./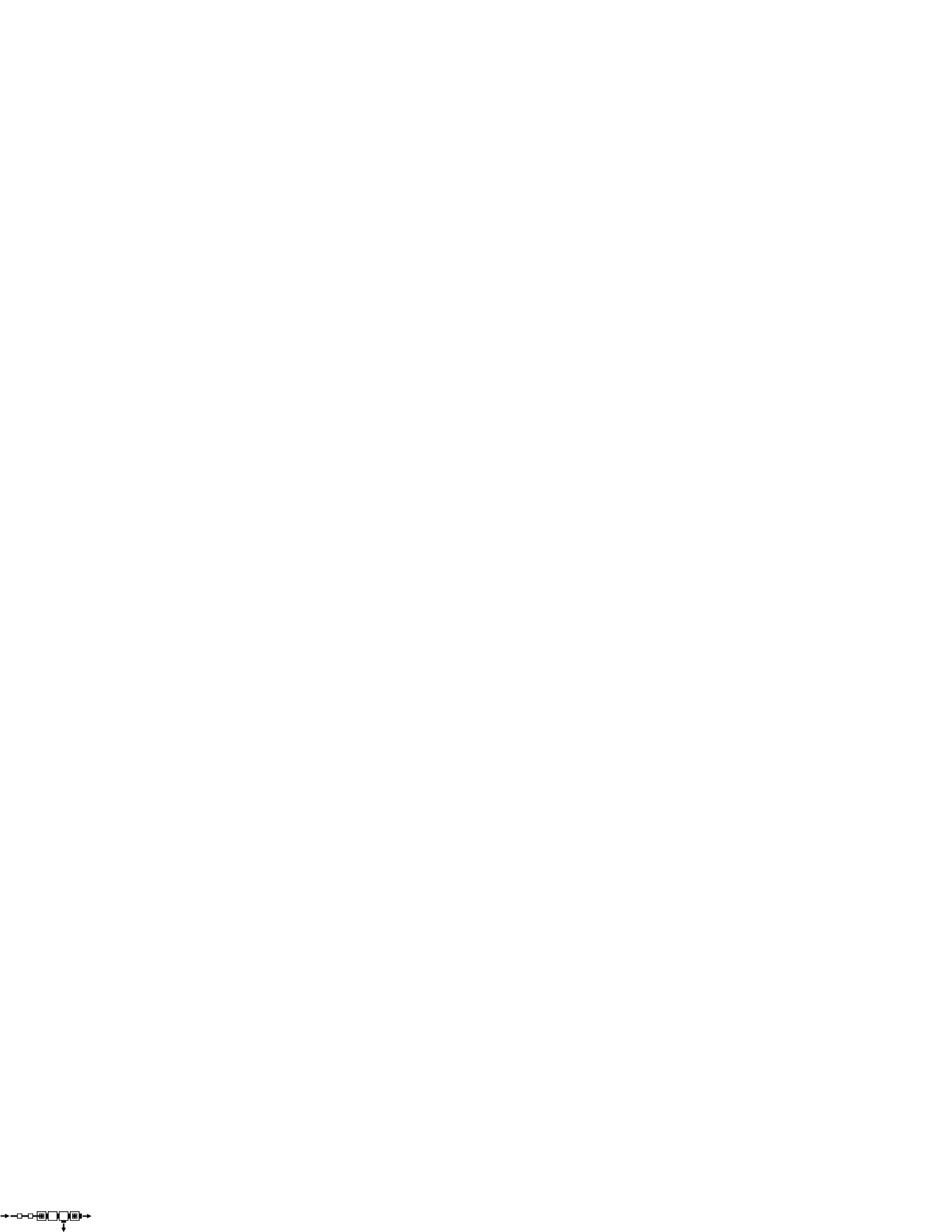}

(d) Middle bottom roof gadget
\end{minipage}}
\end{minipage}

\bigskip

\fbox{
\begin{minipage}[t]{0.978\textwidth}
\centering
 \includegraphics[width=0.7\textwidth]{./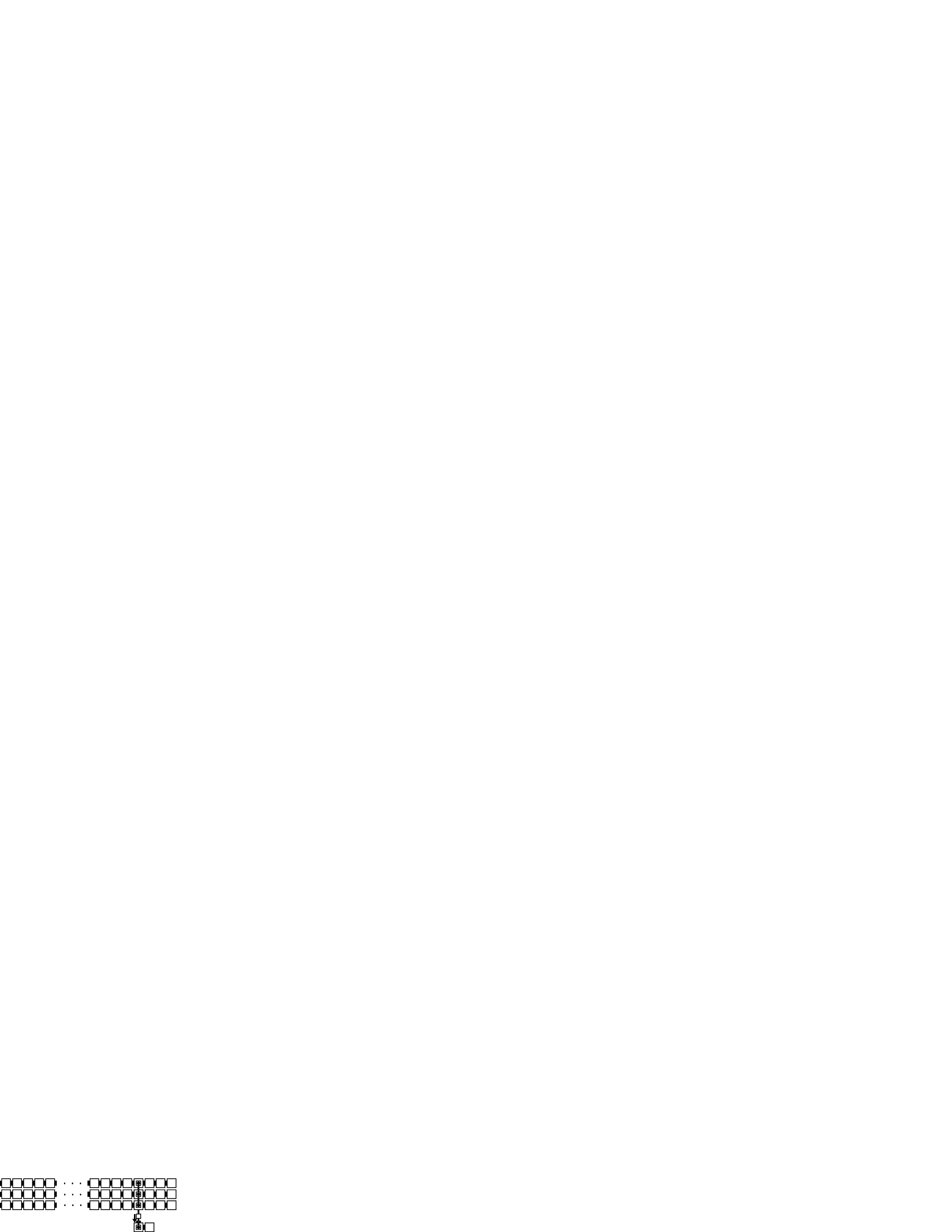}

(e) Main roof gadget: The roof itself can vary in height from 1 to 4 tiles (3 are shown here)
\end{minipage}}

\caption{Second set of gadgets used in the construction of the zig-zag
  counter. In each gadget, the black arrows indicate the entry and
  exit points of the gadget.}
\label{fig:counter_gadgets2}
\vspace{0pt}
\end{figure}

\end{document}